\numberwithin{equation}{section}
\newtheorem{theorem}{Theorem}[section]
\newtheorem{remark}[theorem]{Remark}
\newtheorem{lemma}[theorem]{Lemma}
\newtheorem{corollary}[theorem]{Corollary}
\newtheorem{proposition}[theorem]{Proposition}
\title[Persistence in Phage-Bacteria Communities]{Persistence in Phage-Bacteria Communities with Nested and One-to-One Infection Networks}
\author[Daniel A. Korytowski and Hal L. Smith]{}
\begin{document}

\maketitle

\centerline{\scshape Daniel A. Korytowski and Hal L. Smith}
\medskip
{\footnotesize
 \centerline{ School of Mathematical and Statistical Sciences}
 \centerline{Arizona State University}
   \centerline{Tempe, AZ 85287, USA}
}

\begin{abstract}
We show that a bacteria and bacteriophage system with either a perfectly nested or a one-to-one infection network is permanent, a.k.a uniformly persistent, provided that bacteria that are superior competitors for nutrient devote the least to defence against infection and the virus that are the most efficient at infecting
host have the smallest host range.
By ensuring that the density-dependent reduction in bacterial growth rates are independent of bacterial strain, we are able to arrive at the permanence conclusion sought by Jover et al \cite{Jover}.
The same permanence results hold for the one-to-one infection network considered by Thingstad \cite{Th} but without virus efficiency ordering.
Additionally we show the global stability for the nested infection network, and the global dynamics for the one-to-one network.
\end{abstract}

\section{Introduction}

Jover, Cortez, and Weitz \cite{Jover} observe that some bipartite infection networks
in bacteria and virus communities tend to have a nested structure, characterized by a hierarchy
among both host and virus strains, which determines which virus may infect which host. They argue that trade-offs between
competitive ability of the bacteria hosts and defence against infection and, on the part of virus, between virulence and transmissibility
versus host range  can sustain
a nested infection network (NIN).  Specifically, they find that:  ``bacterial growth rate should decrease
with increasing defence against infection'' and  ``the efficiency of viral infection should decrease
with host range''. Their findings are based on the analysis of a  Lotka-Volterra  model incorporating the above-mentioned trade-offs which
strongly suggests that the perfectly nested community structure
of $n$-host bacteria and $n$-virus is permanent,  or uniformly
persistent \cite{HS,ST}.

Inspired by their work, in \cite{KS} we replace the Lotka-Volterra model by
a chemostat based model in which bacteria compete for a growth-limiting nutrient. In a chemostat model,
each bacterial strain is endowed with a break-even concentration, $R^*$, of nutrient below which it cannot grow such that,
in the absence of virus, only the strain with smallest $R^*$ survives. Thus, within a community of bacteria competing for a single limiting nutrient, the competitiveness of the various  strains are naturally ordered
by their $R^*$ values. In \cite{KS}, we show that nested infection networks of arbitrary size are permanent
provided that $R^*$ values increase
with increasing defence against infection and that the efficiency of viral infection should decrease
with host range. We also show how a bacteria-virus community with NIN of arbitrary size can be assembled by
the successive addition of one new species at a time, answering the question of ``How do NIN come to be?".

We show that the Lotka-Volterra based model of Jover et al \cite{Jover} can be modified in such a way that
the permanence conclusions which they sought can be attained. The key is to ensure that density-dependent reduction in bacterial growth rates
be independent of bacterial strain. Following \cite{Jover}, we assume that virus strain $V_i$ is characterized by its adsorption rate $\phi_i$ and its burst size $\beta_i$, both of which are assumed to be independent of which host strain it infects, and its specific death rate $n_i$. The density of bacteria strain $i$ is denoted by $B_i$, and its specific growth rate is $r_i$. The ``mean field'',
density-dependent depression of growth due to inter and intra-specific competition term $\sum_j a_jB_j$ is common to all strains. The equations of our model are the following.

\begin{eqnarray}\label{eqns}
B_i' &=& B_i\left(r_i-\sum\limits_{j=1}^n a_jB_j\right)-B_i \sum_{j=1}^n M_{ij}\phi_jV_j\\
V_i' &=& \beta_i\phi_iV_i \sum_{j=1}^n M_{ji} B_j-n_iV_i,\ 1\le i\le n. \nonumber
\end{eqnarray}
where matrix $M$ captures the infection network structure:
$$
M_{ij}=\left\{
         \begin{array}{cc}
           1, & V_j\ \hbox{infects}\ H_i \\
           0, &  V_j\ \hbox{does not infect}\ H_i \\
         \end{array}
       \right\}
$$
In the system considered in \cite{Jover}, the bacterial host dynamics in the absence of virus is modeled
as $B_i'=r_iB_i(1-K^{-1}\sum_j B_j)$; a consequence of this is the simplex of equilibria $\sum_j B_j=K$. We avoid this.

Motivated by the work of Jover et al \cite{Jover} and the work of Thingstad \cite{Th},
we consider two special network structures: nested infection networks (NIN) with upper triangular matrix $M$, and one-to-one infection networks (OIN) with $M=I$, the identity matrix.

The scaling of variables
$$
P_i=\phi_iV_i, \ H_i=B_i, \ e_i=\frac{\beta_i \phi_i}{n_i},
$$
exposes a virus infection efficiency parameter $e_i$ for each virus. Hereafter, we consider the resulting scaled
system:

\begin{eqnarray}\label{eqns-scaled}
H_i' &=& H_i\left(r_i-\sum\limits_{j=1}^n a_jH_j\right)-H_i \sum_{j=1}^n M_{ij}P_j\\
P_i' &=& e_in_iP_i \left(\sum_{j=1}^n M_{ji}H_j-\frac{1}{e_i}\right),\ 1\le i\le n. \nonumber
\end{eqnarray}

System \eqref{eqns-scaled} defines a dissipative dynamical system on the nonnegative orthant of $\mathbb{R}^{2n}$

\begin{proposition}\label{bound} Solutions of \eqref{eqns-scaled} with nonnegative (positive) initial data are well-defined for all $t\ge 0$ and remain nonnegative (positive).
In addition, the system has a compact global attractor. Indeed, if
 $F(t)=\sum\limits_{i=1}^{n}H_i(t)+\sum\limits_{i=1}^{n}\frac{P_i(t)}{e_in_i}$ then $$F(t)\le \frac{Q}{W}+(F(0)-\frac{Q}{W})e^{-Wt} \le \max\{F(0),\frac{Q}{W}\},$$
 and
 $$\limsup_{t\to\infty}F(t)\le \sum_{i=1}^{n}(1+\frac{r_i}{W})\frac{r_i}{a_i},$$
 where $K=\max\limits_{i=1}^{n}\{H_i(0),\frac{r_i}{a_i}\}$, $W=\min\limits_{i=1}^{n}\{n_i\}$ and $Q = \sum\limits_{i=1}^{n}(W+r_i)K$.
\end{proposition}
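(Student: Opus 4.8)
The plan is to get everything from smoothness of the vector field together with one Lyapunov‑type functional, namely $F$ itself. \emph{Local existence, uniqueness, positivity.} The right‑hand side of \eqref{eqns-scaled} is polynomial, hence locally Lipschitz, so through each initial datum there is a unique solution on a maximal interval $[0,T_{\max})$. In the $i$‑th $H$‑equation the factor $H_i$ multiplies the entire right side, and in the $i$‑th $P$‑equation the factor $P_i$ multiplies the entire right side; hence each coordinate hyperplane $\{H_i=0\}$ and $\{P_i=0\}$ is forward invariant. By uniqueness, solutions starting in the open positive orthant stay there, and solutions starting in the closed nonnegative orthant remain there.

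\emph{A priori bound on the $H_i$.} Since $a_jH_j\ge 0$ and $M_{ij}P_j\ge 0$ for all $i,j$, we have $H_i'\le H_i(r_i-a_iH_i)$, so comparison with the logistic equation $u'=u(r_i-a_iu)$ gives $H_i(t)\le\max\{H_i(0),r_i/a_i\}\le K$ for all $t\in[0,T_{\max})$, and moreover $\limsup_{t\to\infty}H_i(t)\le r_i/a_i$ once global existence is known.

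\emph{The functional $F$.} Differentiate $F$ along solutions and observe that the bilinear infection terms cancel: the $H$‑equations contribute $-\sum_i H_i\sum_j M_{ij}P_j=-\sum_{i,j}M_{ij}H_iP_j$, while the $P$‑equations, after multiplication by $(e_in_i)^{-1}$, contribute $+\sum_i P_i\sum_j M_{ji}H_j=+\sum_{i,j}M_{ij}H_iP_j$ (relabel $i\leftrightarrow j$). Hence
$$F'=\sum_i r_iH_i-\Bigl(\sum_j a_jH_j\Bigr)\Bigl(\sum_i H_i\Bigr)-\sum_i\frac{P_i}{e_i}.$$
Adding $WF$ and using $W=\min_i n_i\le n_i$, so that each term $P_i\bigl(\tfrac{W}{e_in_i}-\tfrac1{e_i}\bigr)\le 0$, and discarding the nonpositive quadratic term, yields $F'+WF\le\sum_i(r_i+W)H_i\le\sum_i(r_i+W)K=Q$. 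Gronwall's inequality then gives $F(t)\le Q/W+(F(0)-Q/W)e^{-Wt}\le\max\{F(0),Q/W\}$ on $[0,T_{\max})$; since $F$ dominates every state variable this precludes blow‑up, so $T_{\max}=\infty$. For the $\limsup$, fix $\epsilon>0$; by the logistic bound there is $T$ with $H_i(t)\le r_i/a_i+\epsilon$ for $t\ge T$, and repeating the estimate gives $F'+WF\le\sum_i(r_i+W)(r_i/a_i+\epsilon)$ for $t\ge T$, whence $\limsup_{t\to\infty}F(t)\le W^{-1}\sum_i(r_i+W)(r_i/a_i+\epsilon)$; letting $\epsilon\to0$ produces the stated bound $\sum_i(1+r_i/W)r_i/a_i$.

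\emph{Compact global attractor and main obstacle.} Because $H_i\le F$ and $P_i\le e_in_iF$, the bounds above show the semiflow generated by \eqref{eqns-scaled} on $\mathbb{R}_{\ge0}^{2n}$ is point dissipative with a bounded absorbing set; being a finite‑dimensional ODE semiflow it is asymptotically smooth, so by standard dissipative systems theory (e.g.\ \cite{HS}) it has a compact global attractor. The only step that is not routine bookkeeping is recognizing that the particular weights $1/(e_in_i)$ built into $F$ are exactly the ones that annihilate the predation/infection bilinear terms, via the transpose relationship between the appearances of $M$ in the host equations and in the virus equations; once that cancellation is in hand the rest is elementary differential inequalities and comparison.
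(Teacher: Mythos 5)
Your proposal is correct and follows essentially the same route as the paper: logistic comparison for the $H_i$, the cancellation of the bilinear infection terms in $F'$ via the transpose symmetry of $M$, the bound $-\sum_i P_i/e_i\le -W\sum_i P_i/(e_in_i)$, and integration of the resulting linear differential inequality. The only difference is that you spell out the routine steps (local Lipschitz existence, invariance of the coordinate hyperplanes, asymptotic smoothness for the attractor) that the paper leaves implicit.
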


\begin{proof}
Existence and positivity of solutions follow from the form of the right hand side.
Therefore, $H_i'(t) \le H_i(t)(r_i-a_iH_i(t))$.  Hence $H_i(t) \le K$ and $\limsup_{t\to\infty}H_i(t)\le r_i/a_i$.
\begin{align*}
&\frac{dF}{dt} = \sum\limits_{i=1}^{n}r_iH_i-(\sum_{i=1}^n H_i)(\sum\limits_{j=1}^{n}a_jH_j)-\sum\limits_{i=1}^{n}\frac{P_i}{e_i} \le \sum\limits_{i=1}^{n}r_iH_i-W\sum\limits_{i=1}^{n}\frac{P_i}{e_in_i}\\
& =\sum\limits_{i=1}^{n}(W+r_i)H_i-W F.
\end{align*}
The estimate on $F(t)$ follows by bounding the first summation by $Q$ and integrating; the estimate on the limit superior
follows from the estimate of the limit superior of the $H_i$ above and by integration.
\end{proof}

\section{Nested Infection Networks}

If $M$ is upper triangular, then our system becomes:

\begin{eqnarray}\label{LV}
H_i' &=& H_i\left(r_i-\sum\limits_{j=1}^n a_jH_j-\sum\limits_{j \ge i}P_j\right)\\
P_i' &=& e_in_iP_i \left(\sum\limits_{j \le i}H_j-\frac{1}{e_i}\right),\ 1\le i\le n. \nonumber
\end{eqnarray}

Let $E^{*}$ be the equilibrium of the system where each component is positive.  From the second equation of (\ref{LV}),\\
\begin{equation*}\label{Hequilibrium}
H_1^{*}=\frac{1}{e_1},\ H_j^{*}=\frac{1}{e_j}-\frac{1}{e_{j-1}}, j>1
\end{equation*}
These are all positive if
\begin{equation}\label{enorder}
e_1>e_2>e_3>\cdots>e_n
\end{equation}

Let $F_i(H)=r_i-\sum\limits_{j=1}^n a_jH_j$, then at $E^{*}$ from the first equation of (\ref{LV}), $F_i(H^{*})=\sum\limits_{j \ge i}P_j^{*}$.  Let
\begin{equation}\label{q}
Q_n=\frac{a_1}{e_1}+(\frac{a_2}{e_2}-\frac{a_2}{e_1})+(\frac{a_3}{e_3}-\frac{a_3}{e_2})+\cdots+(\frac{a_n}{e_n}-\frac{a_n}{e_{n-1}})
\end{equation}
The right hand side of $F_i(H^{*})$ is positive, and decreases as i increases, therefore $F_i(H^{*})$ needs to be positive and decreasing which is satisfied by
\begin{equation}\label{ss}
Q_n<r_n
\end{equation}
and
\begin{equation}\label{r}
r_1>r_2>\cdots>r_n.
\end{equation}

Inequalities \eqref{ss} and \eqref{r} can be combined to give
\begin{equation}\label{combo}
\frac{a_1}{e_1} \le Q_j \le Q_n < r_n < r_j, 1 \le j < n.
\end{equation}

\begin{proposition}\label{positivequil}
$E^{*}$ exists if and only if (\ref{enorder}), (\ref{ss}) and (\ref{r}) hold.

In fact,
\begin{eqnarray}\label{equil}
H_1^*&=& \frac{1}{e_1},\ H_j^*=\frac{1}{e_j}-\frac{1}{e_{j-1}},\ j>1,\\
P_j^*&=&r_j-r_{j+1},\ j<n,\ P_n^*=r_n-Q_n.\nonumber
\end{eqnarray}
Furthermore, the above also implies the existence of a unique equilibrium $E^{\dag}$ with all components positive except for $P_n=0$.  In fact,
\begin{eqnarray}\label{equiln}
H_n^{\dag}&=& H_n^{*}+\frac{P_n^{*}}{a_n},\ H_j^{\dag}=H_j^{*},\ 1 \le j < n,\\
P_j^{\dag}&=&P_j^{*},\ j<n,\ P_n^{\dag}=0.\nonumber
\end{eqnarray}

\end{proposition}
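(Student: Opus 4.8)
The plan is to solve the equilibrium equations head‑on, using the fact that the upper‑triangular structure of $M$ turns them into a triangular linear system once the positive factors are cancelled, so that everything telescopes.

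\emph{Construction of $E^*$.} First I would note that, since all components are positive at $E^*$, the bracket $\sum_{j\le i}H_j-\tfrac1{e_i}$ in each $P_i$‑equation of \eqref{LV} must vanish; the case $i=1$ gives $H_1^*=1/e_1$, and subtracting the $(i-1)$st identity from the $i$th gives $H_i^*=1/e_i-1/e_{i-1}$, which is positive for every $i$ exactly when \eqref{enorder} holds. Then $\sum_j a_jH_j^*=Q_n$ by \eqref{q}, so vanishing of the bracket in each $H_i$‑equation forces $\sum_{j\ge i}P_j=r_i-Q_n$; the case $i=n$ gives $P_n^*=r_n-Q_n$, positive iff \eqref{ss}, and differencing consecutive identities gives $P_i^*=r_i-r_{i+1}$ for $i<n$, positive iff \eqref{r}. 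This shows the three conditions are necessary; read in reverse, the same computation shows they are sufficient, i.e. under \eqref{enorder}, \eqref{ss}, \eqref{r} the closed forms \eqref{equil} define a genuine positive equilibrium, their mutual compatibility being exactly what is recorded in \eqref{combo}.

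\emph{Construction of $E^\dagger$.} Next I would rerun the computation with the single change $P_n=0$. The $P_i$‑equations for $i\le n-1$ are untouched, so $H_j^\dagger=H_j^*$ for $j<n$, but the constraint $\sum_{j\le n}H_j=1/e_n$ disappears, leaving $H_n$ to be fixed instead by the $H_n$‑equation, which now reads $\sum_j a_jH_j=r_n$; solving gives $H_n^\dagger=H_n^*+P_n^*/a_n>0$. The remaining $H_i$‑equations ($i<n$) then read $\sum_{i\le j\le n-1}P_j^\dagger=r_i-r_n$, and telescoping returns $P_i^\dagger=P_i^*$ for $i<n$, all positive by \eqref{r}. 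Uniqueness is automatic, since each step pins down exactly one component.

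I do not anticipate a real obstacle here: the argument is bookkeeping. The points that need care are the telescoping steps (subtracting consecutive equilibrium identities in the correct order) and making the ``only if'' direction a genuine equivalence rather than a one‑way implication; it is also worth observing that \eqref{enorder} forces $Q_j$ to be increasing in $j$, which is what makes \eqref{ss} and \eqref{r} jointly consistent.
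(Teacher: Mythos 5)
Your proposal is correct and follows essentially the same route as the paper, which derives \eqref{equil} in the text immediately preceding the proposition by exactly this telescoping of the vanishing brackets (first the $P_i$-equations giving $H_j^*$ and the condition \eqref{enorder}, then $\sum_j a_jH_j^*=Q_n$ and the $H_i$-equations giving $P_j^*$ and the conditions \eqref{ss}, \eqref{r}). Your explicit treatment of $E^\dag$ — rerunning the computation with $P_n=0$ so that the $H_n$-equation fixes $H_n^\dag=H_n^*+P_n^*/a_n$ — is the intended (and correct) completion of the part the paper leaves unproved.
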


\begin{remark}\label{Qn}
\eqref{combo} implies the existence of an unique family of equilibria $E^*_k$ with $H_j,P_j=0, \ j>k$ described by \eqref{equil}, but with
$Q_k$ replacing $Q_n$. Another family of equilibria, $E^\dag_k$, exists with $H_j=0, \ j>k$ and $P_j=0,\ j\ge k$  described by \eqref{equiln}, but with $Q_k$ replacing $Q_n$.
There are many other equilibria but we have no need to enumerate all of them.
\end{remark}


Hereafter, we assume without further comment that (\ref{ss}), and (\ref{r}) hold.

We write $H_{i,\infty} = \liminf\limits_{t \to \infty}H_i(t)$ and $H_i^{\infty}$ with limit
superior in place of limit inferior.

\begin{proposition}\label{limsup}
\begin{itemize}
  \item[(a)] If $(\sum\limits_{j \le i} H_j(t))^{\infty}<\frac{1}{e_i}$ then $P_i(t) \to 0$.
  \item[(b)] If $i < j, P_i(0) > 0,$ and if $(H_{i+1}+H_{i+2}+\cdots+H_{j})^{\infty}<\frac{1}{e_j}-\frac{1}{e_i}$ then $P_j(t) \to 0$.
  \item[(c)] If $i < j, H_i(0) > 0$, and if $(P_i+P_{i+1}+\cdots+P_{j-1})^{\infty} < (r_i-r_j)$ then $H_j(t) \to 0$.
\end{itemize}
\end{proposition}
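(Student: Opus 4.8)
The plan is to prove all three statements by the same mechanism: each asserts that a certain variable $X$ (namely $P_i$, $P_j$, or $H_j$) tends to zero because its logarithmic growth rate $X'/X$ is, eventually, bounded above by a strictly negative constant. The standard tool here is a fluctuation / Lyapunov-type argument using the $\limsup$ hypothesis. Concretely, for part (a), from the second equation of \eqref{LV} we have
\[
\frac{P_i'}{P_i}=e_in_i\Bigl(\sum_{j\le i}H_j-\frac{1}{e_i}\Bigr).
\]
If $\bigl(\sum_{j\le i}H_j\bigr)^{\infty}<1/e_i$, pick $\varepsilon>0$ with $\bigl(\sum_{j\le i}H_j\bigr)^{\infty}<1/e_i-\varepsilon$; then for all large $t$ the bracket is $\le-\varepsilon$, so $(\log P_i)'\le-e_in_i\varepsilon<0$ eventually, forcing $P_i(t)\to 0$. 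Part (b) is handled identically after noting that
\[
\frac{P_j'}{P_j}=e_jn_j\Bigl(\sum_{k\le j}H_k-\frac{1}{e_j}\Bigr)
=e_jn_j\Bigl(\sum_{k\le i}H_k+\sum_{k=i+1}^{j}H_k-\frac{1}{e_j}\Bigr),
\]
and part (c) likewise from
\[
\frac{H_j'}{H_j}=\Bigl(r_j-\sum_k a_kH_k-\sum_{k\ge j}P_k\Bigr),\qquad
\frac{H_i'}{H_i}=\Bigl(r_i-\sum_k a_kH_k-\sum_{k\ge i}P_k\Bigr),
\]
so that subtracting, $(\log H_j-\log H_i)'=(r_j-r_i)+\sum_{k=i}^{j-1}P_k$, and the hypothesis $\bigl(\sum_{k=i}^{j-1}P_k\bigr)^{\infty}<r_i-r_j$ makes this eventually bounded above by a negative constant; since $\log H_i$ is bounded above (Proposition \ref{bound}), $\log H_j\to-\infty$, i.e. $H_j\to 0$.

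The role of the positivity hypotheses $P_i(0)>0$ in (b) and $H_i(0)>0$ in (c) is precisely to guarantee, via Proposition \ref{bound}, that the corresponding variable stays positive for all $t$ so that $\log P_i$, respectively $\log H_i$, is well-defined and finite; without them the ``difference of logarithms'' trick is not available, and indeed the conclusion can fail (if $P_i\equiv0$, nothing constrains $P_j$, and if $H_i\equiv0$, nothing constrains $H_j$). In (b) one should also record that $P_i$ is bounded above (again Proposition \ref{bound}) so that $\log P_i$ does not run off to $+\infty$; combined with $(\log P_j-c\log P_i)'\le-\delta$ for suitable $c\ge 0$ — here in fact we can simply use $(\log P_j)'\le e_jn_j(-\delta)$ directly, no subtraction needed, since $\sum_{k\le i}H_k\le \sum_{k\le j}H_k$ doesn't even enter — the cleanest route for (b) is identical to (a): bound $\sum_{k=i+1}^j H_k$ by its $\limsup$ and note $\sum_{k\le i}H_k$ is automatically at most... wait, no. Let me restate: for (b), write $\sum_{k\le j}H_k=\sum_{k\le i}H_k+\sum_{k=i+1}^j H_k$; the first block is bounded but not small, so the subtraction with $\log P_i$ is genuinely needed. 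Using $P_i'/P_i=e_in_i(\sum_{k\le i}H_k-1/e_i)$ we get $\frac{d}{dt}\bigl(\tfrac{\log P_j}{e_jn_j}-\tfrac{\log P_i}{e_in_i}\bigr)=\sum_{k=i+1}^jH_k-\bigl(\tfrac1{e_j}-\tfrac1{e_i}\bigr)$, which is eventually $\le-\varepsilon$; since $\log P_i$ is bounded above, $\log P_j\to-\infty$.

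The main obstacle is therefore not analytic depth but bookkeeping: one must be careful that in (b) and (c) the "comparison" variable ($P_i$ or $H_i$, respectively) is bounded \emph{above} (so that subtracting its logarithm cannot rescue $P_j$ or $H_j$ from decay) and is bounded \emph{below away from zero only in the sense that it stays positive} — we do not need a uniform lower bound, merely $\log(\cdot)$ finite, which positivity of the initial datum plus Proposition \ref{bound} supplies. Once the correct linear combination of logarithms is identified in each case, the argument reduces to: eventually the derivative of that combination is $\le-\delta<0$; a term of the combination is bounded above; hence the remaining term $\to-\infty$; hence the corresponding population $\to 0$. I would present (a) in full detail and then remark that (b) and (c) follow by the same fluctuation argument applied to $\tfrac{1}{e_jn_j}\log P_j-\tfrac{1}{e_in_i}\log P_i$ and to $\log H_j-\log H_i$ respectively.
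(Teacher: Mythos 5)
Your proposal is correct and follows essentially the same route as the paper: for each part one forms the appropriate combination of logarithms ($\tfrac{1}{e_in_i}\log P_i-\tfrac{1}{e_jn_j}\log P_j$ in (b), $\log H_i-\log H_j$ in (c)), observes that the hypothesis makes its derivative eventually one-signed and bounded away from zero, and uses boundedness of the comparison variable to force $P_j$, respectively $H_j$, to zero. The only cosmetic issue is the mid-paragraph false start in your treatment of (b), which you correct yourself; the final argument is the paper's.
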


\begin{proof} of (a):
The equation for $P_i$ implies that
\begin{equation*}
\frac{d}{dt}\log P_i^{\frac{1}{n_ie_i}}=\sum\limits_{j \le i} H_j(t)-\frac{1}{e_i}
\end{equation*}
If $(\sum\limits_{j \le i}^n H_j(t))^{\infty}<\frac{1}{e_i}$ is false, then $P_i \to \infty$, a contradiction to $P_i$ being bounded.  Assertion (a) is transparent.

Proof of (b): If $i < j, P_i(0) > 0$, and if $(H_{i+1}+H_{i+2}+\cdots+H_{j})^{\infty}<\frac{1}{e_j}-\frac{1}{e_i}$ then
\begin{align*}
\frac{d}{dt}\log \frac{P_i^{\frac{1}{n_ie_i}}}{P_j^{\frac{1}{n_j e_j}}} =& \frac{P_i'}{e_i n_i P_i}-\frac{P_j'}{n_j e_j P_j}\\
=& \frac{-1}{e_i}+\frac{1}{e_j}-(H_{i+1}+H_{i+2}+\cdots+H_{j})\ge \epsilon, t \ge T
\end{align*}
for some $\epsilon, T > 0$.  Therefore, $\frac{P_i^{\frac{1}{n_ie_i}}}{P_j^{\frac{1}{n_j e_j}}} \to \infty$, and since $P_i, P_j$ are bounded, $P_j(t) \to 0$.

Proof of (c): assume that $H_i(0) > 0$ and $(P_i+P_{i+1}+\cdots+P_{j-1})^{\infty} < (r_i-r_j)$, then
\begin{align*}
\frac{d}{dt}\log \frac{H_i(t)}{H_j(t)} =& \frac{H_i'}{H_i}-\frac{H_j'}{H_j}\\
=& (r_i-r_j)-(P_i(t)+P_{i+1}(t)+\cdots+P_{j-1}(t))\\
\end{align*}
It follows that $\frac{H_i}{H_j} \to \infty$, and since $H_i, H_j$ are bounded, $H_j(t) \to 0$.
\end{proof}

If there are no virus present, then host $H_1$ drives the other hosts to extinction.

\begin{lemma}\label{extinction2}
If $P_i \equiv 0, 1 \le i \le n, H_1(0)>0$ then $H_1 \to \frac{r_1}{a_1}$.
\end{lemma}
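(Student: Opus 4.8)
The plan is to reduce the system \eqref{LV} with all $P_i\equiv 0$ to the classical Lotka--Volterra competition model
\[
H_i' = H_i\Bigl(r_i-\sum_{j=1}^n a_jH_j\Bigr),\quad 1\le i\le n,
\]
and exploit the common ``mean field'' competition term $\sum_j a_jH_j$, which is the device the paper has built in precisely to make strain $H_1$ (the one with largest $r_i$, by \eqref{r}) the winner. First I would observe that by Proposition~\ref{bound} all solutions are bounded and, for $i\ge 2$ with $H_i(0)>0$, the quotient argument already used in the proof of Proposition~\ref{limsup}(c) gives
\[
\frac{d}{dt}\log\frac{H_1(t)}{H_i(t)} = r_1-r_i > 0,
\]
so $H_i(t)/H_1(t)\to 0$; in particular $H_i(t)\to 0$ for all $i\ge 2$ (if some $H_i(0)=0$ then $H_i\equiv 0$ and there is nothing to prove for that index). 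The only real content left is then the one-dimensional limit $H_1(t)\to r_1/a_1$.

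To get that limit I would use a standard limiting/asymptotically-autonomous argument: from $H_1'=H_1(r_1-a_1H_1-\sum_{j\ge 2}a_jH_j)$ and $\sum_{j\ge 2}a_jH_j(t)\to 0$, the equation for $H_1$ is an asymptotically autonomous perturbation of the logistic equation $x'=x(r_1-a_1x)$, whose positive equilibrium $r_1/a_1$ is globally attracting on $(0,\infty)$. Since $H_1(0)>0$ forces $H_1(t)>0$ for all $t$, and since $H_1$ is bounded and bounded away from $0$ for large $t$ (a Gr\"onwall-type lower bound using $r_1-a_1H_1-\sum_{j\ge2}a_jH_j\ge r_1/2$ once the tail is small enough and $H_1$ is below, say, $r_1/(2a_1)$), Thieme's theory of asymptotically autonomous systems — or, just as easily, a direct $\liminf/\limsup$ squeeze — yields $H_1(t)\to r_1/a_1$. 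Alternatively one can avoid citing machinery: set $S=\sum_j H_j$, note $S'=\sum_j r_jH_j - S\sum_j a_jH_j$, and combine $H_i\to 0$ ($i\ge2$) with the scalar comparison for $H_1$ directly.

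The step I expect to need the most care is the lower bound keeping $H_1$ away from $0$: one must rule out $H_1(t)\to 0$. This follows because for $t$ large the bracket $r_1-a_1H_1-\sum_{j\ge2}a_jH_j$ is at least $r_1/2>0$ whenever $H_1\le r_1/(4a_1)$, so $H_1$ cannot decrease through that threshold and stay there; hence $\liminf_{t\to\infty}H_1(t)>0$. With that in hand the logistic squeeze closes the argument. No part of this is deep, but writing the asymptotic comparison cleanly — rather than hand-waving ``the perturbation vanishes'' — is where the proof should spend its words.
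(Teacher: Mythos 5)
Your proposal is correct and follows essentially the same route as the paper: Proposition~\ref{limsup}(c) (i.e., the quotient argument $\frac{d}{dt}\log(H_1/H_i)=r_1-r_i>0$) forces $H_i\to 0$ for $i\ge 2$, after which the differential inequalities $H_1(r_1-a_1H_1-2\epsilon)<H_1'\le H_1(r_1-a_1H_1)$ give the logistic squeeze $H_{1,\infty}\ge (r_1-2\epsilon)/a_1$ and $H_1^\infty\le r_1/a_1$. The paper does exactly this two-sided comparison directly, without needing to invoke asymptotically autonomous machinery or a separate argument that $\liminf H_1>0$.
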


\begin{proof}
Since $P_i \equiv 0, H_{i+1} \to 0$ by Proposition~\ref{limsup} (c) for $1 \le i < n$.  Therefore $\forall \epsilon > 0, \exists T > 0$ such that $\forall t \ge T, \sum\limits_{j=2}^n a_jH_j(t) < \epsilon$. Then for $t > T, H_1' > H_1(r_1-a_1H_1-2\epsilon)$.  Therefore $H_{1,\infty} \ge \frac{r_1-2\epsilon}{a_1}$ and since $\epsilon>0$ is arbitrary, $H_{1,\infty} \ge \frac{r_1}{a_1}$.
On the other hand, $H_1' \le H_1(r_1-a_1H_1)$, so $H_1^{\infty} \le \frac{r_1}{a_1}$.  Therefore $H_1 \to \frac{r_1}{a_1}$.
\end{proof}

Now we show that $H_1$ persists if initially present regardless of who else is around; similarly, $H_1$ and $V_1$ persist if initially present regardless of which other host and virus are present.

\begin{proposition}\label{weakpersist}
\begin{enumerate}
\item [(a)] If $H_1(0)>0$, then $H_1^\infty\ge \frac{1}{e_1}$.

\item [(b)] If $H_1(0)>0$ and $P_1(0)>0$, then
$$H_1^\infty\ge \frac{1}{e_1}, \ P_1^\infty\ge \min\{r_1-r_2,  \frac{r_1e_1-a_1}{e_1}\}.$$
\end{enumerate}
\end{proposition}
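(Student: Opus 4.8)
The strategy is a bootstrapping argument that first pins down $H_1^\infty$ from below, then uses that lower bound to control $P_1$. For part (a), I would argue by contradiction: suppose $H_1^\infty < 1/e_1$. Then in particular $(\sum_{j\le 1}H_j(t))^\infty = H_1^\infty < 1/e_1$, so Proposition \ref{limsup}(a) with $i=1$ gives $P_1(t)\to 0$. But the $H_1$ equation reads $H_1' = H_1(r_1 - \sum_j a_jH_j - \sum_{j\ge 1}P_j)$, and I would need to show the bracket is eventually positive when $H_1$ is small. The obstacle is that $\sum_j a_jH_j$ involves all the other hosts, which need not go extinct. However, by Proposition \ref{bound} we have $\limsup H_j \le r_j/a_j$, hence $\limsup \sum_j a_jH_j \le \sum_j r_j$, which is not obviously less than $r_1$. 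So a cruder bound is not enough; instead I would invoke a standard acyclicity / chain-recurrence argument (as in \cite{HS,ST}): the omega-limit set of a trajectory with $H_1(0)>0$ but $H_1^\infty < 1/e_1$ must lie in the boundary face $\{H_1 = 0\}$ is wrong — rather, one shows $H_1$ cannot have arbitrarily small limit inferior without the system settling near an equilibrium on $\{H_1 = 0\}$ or on $\{H_1 > 0, P_1 = 0,\dots\}$, and one checks directly that every such equilibrium is "invadable" by $H_1$, i.e. $r_1 - \sum_j a_jH_j^{\rm eq} - \sum_{j}P_j^{\rm eq} > 0$ there. The cleanest route, though, is the average-Lyapunov / Fourier-averaging trick already used in Proposition \ref{limsup}: if $H_1^\infty < 1/e_1$ then (by part (a) of \ref{limsup}) $P_1 \to 0$; now compare $H_1$ against $P_i$ for $i\ge 2$ and against the other $H_j$ using parts (b),(c) of \ref{limsup} to drive the competitors down, eventually concluding $\liminf(r_1 - \sum a_jH_j - \sum_{j\ge 2}P_j) > 0$ along a suitable sequence, contradicting $H_1^\infty < 1/e_1 < r_1/a_1$.

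For part (b), assume additionally $P_1(0) > 0$; the bound $H_1^\infty \ge 1/e_1$ is already in hand from (a). To bound $P_1^\infty$ from below, suppose $P_1^\infty < \min\{r_1-r_2, (r_1e_1-a_1)/e_1\}$. The first term $r_1 - r_2$ is exactly the threshold in Proposition \ref{limsup}(c) with $i=1, j=2$ (the sum there being just $P_1$), so $P_1^\infty < r_1 - r_2$ would force $H_2 \to 0$; iterating, $P_1^\infty < r_1 - r_j$ for all $j$ — but we only assumed $P_1^\infty < r_1-r_2 \le r_1 - r_j$ since $r_2 \ge r_j$ fails for $j>2$; so one needs $\min$ over a chain, or better, observe that $P_1^\infty < r_1-r_2$ together with $r_2 > r_3 > \cdots$ does give $P_1^\infty < r_1 - r_j$ for every $j \ge 2$, hence $H_j \to 0$ for all $j \ge 2$. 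Then $\sum_{j\ge 2}a_jH_j \to 0$ and the $H_1$ equation becomes asymptotically $H_1' = H_1(r_1 - a_1H_1 - P_1 - (\text{small}))$, while $P_1' = e_1n_1 P_1(H_1 - 1/e_1)$. This is (asymptotically) a planar Lotka–Volterra predator–prey system whose unique interior equilibrium is $H_1 = 1/e_1$, $P_1 = r_1 - a_1/e_1 = (r_1e_1 - a_1)/e_1 > 0$ (positive by \eqref{combo} with $j=1$, since $a_1/e_1 \le Q_n < r_1$). A Lyapunov-function argument for the limiting planar system (the classical $V = H_1 - H_1^*\log H_1 + c(P_1 - P_1^*\log P_1)$), together with a theory-of-asymptotically-autonomous-systems or chain-recurrence argument to pass back to the full system, shows $P_1^\infty \ge (r_1e_1-a_1)/e_1$, contradicting the assumption; hence the stated bound.

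The main obstacle is making the reduction to the limiting planar system rigorous in part (b): once $H_j \to 0$ for $j\ge 2$, the $(H_1,P_1)$ dynamics are only asymptotically autonomous, and one must ensure no pathological behavior (e.g. $P_1$ dipping arbitrarily low infinitely often) slips through. I expect this is handled by Thieme's theory of asymptotically autonomous systems (the limiting planar Lotka–Volterra system has a globally attracting interior equilibrium, so chain-recurrent sets of the limit system are trivial), or alternatively by a direct average-Lyapunov estimate: integrating $\frac{d}{dt}\log P_1 = e_1 n_1(H_1 - 1/e_1)$ shows $H_1$ has time-average $1/e_1$, and then integrating $\frac{d}{dt}\log H_1 = r_1 - a_1 H_1 - P_1 - o(1)$ shows $P_1$ has time-average $r_1 - a_1/e_1$, which immediately yields $P_1^\infty \ge r_1 - a_1/e_1$. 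The $\min$ in the statement simply records which of the two obstructions — competitive exclusion of $H_2$ failing, or the predator-prey balance failing — is the binding one.
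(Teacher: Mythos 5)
Your skeleton is the paper's: argue by contradiction, use Proposition~\ref{limsup}(a) to get $P_1\to 0$, run an extinction cascade to kill all other species, and then read off a contradiction (resp.\ the bound on $P_1^\infty$) from the $H_1$ equation (resp.\ a predator--prey balance). But two steps are genuinely incomplete. First, the cascade: in part (b) you conclude $H_j\to 0$ for all $j\ge 2$ directly from $P_1^\infty<r_1-r_j$, but Proposition~\ref{limsup}(c) with $i=1$ requires $(P_1+P_2+\cdots+P_{j-1})^\infty<r_1-r_j$, so for $j\ge 3$ you must first show $P_2,\dots,P_{j-1}\to 0$; this is done by interleaving \ref{limsup}(b) (kill $P_2$ after $H_2\to0$, then $H_3$, then $P_3$, \dots), which you never invoke. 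The same omission matters again because the $H_1$ equation in \eqref{LV} contains $\sum_{j\ge 1}P_j$, not just $P_1$: your ``asymptotically planar'' reduction needs $P_j\to 0$ for $j\ge 2$, not merely $H_j\to 0$. In part (a) there is an additional wrinkle you skip entirely: \ref{limsup}(b) requires $P_i(0)>0$ for the reference virus, so when some of the $P_i(0)$ vanish the cascade has to be organized around the smallest $k$ with $P_k(0)>0$ (the paper's case analysis $k=1$, $k=2$, $k>2$); your sketch of (a) never gets past ``drive the competitors down.''

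Second, the endgame of (b). The paper closes with one line: $\frac{d}{dt}\log\bigl(H_1P_1^{a_1/(e_1n_1)}\bigr)=r_1-\frac{a_1}{e_1}-P_1-o(1)$, so $P_1^\infty<\frac{r_1e_1-a_1}{e_1}$ would force the bounded quantity $H_1P_1^{a_1/(e_1n_1)}$ to blow up. Your time-average version over-claims: from boundedness of $P_1$ alone you only get $\limsup_t[H_1]_t\le 1/e_1$ (a lower bound would require $\tfrac1t\log P_1(t)\to 0$, i.e.\ that $P_1$ does not decay exponentially, which is not known a priori); the argument can be repaired by taking the limit along times where $H_1$ is bounded away from $0$ and using only the upper bound on $[H_1]_t$, but as written it does not close. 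The asymptotically-autonomous route is also heavier than needed: the chain-recurrent set of the limiting planar system contains the boundary equilibria $(0,0)$ and $(r_1/a_1,0)$, and you must separately exclude convergence to them before concluding $P_1\to r_1-a_1/e_1$.
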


\begin{proof} of (a).  Assume the conclusion is false. Then $P_1\to 0$ by Proposition~\ref{limsup} (a).

If $P_i(0)=0$ for all $i$, then $H_1(t)\to \frac{r_1}{a_1} \ge \frac{1}{e_1}$ by Lemma~\ref{extinction2} and (\ref{combo}), so we suppose that $P_i(0)>0$  for some $i$.
Let $k$ denote the smallest such integer $i$ for which $P_i(0)>0$.

If $k=1$, then, as noted above, $P_1\to 0$ and so $H_2\to 0$ by\\ Proposition~\ref{limsup} (c). Then $P_2\to 0$
by Proposition~\ref{limsup} (a) or (b).

If $k=2$, then $P_1\equiv 0$ so $H_2\to 0$ by Lemma~\ref{limsup} (c) since $H_1$ and $H_2$ share the same virus.
Since $(H_1+H_2)^\infty=H_1^\infty<\frac{1}{e_1}<\frac{1}{e_2}$, it follows that $P_2\to 0$ by Proposition~\ref{limsup} (a). Now we can
use Proposition~\ref{limsup} (c) to show $H_3\to 0$ and then Proposition~\ref{limsup} (a) or (b) to show $P_3\to 0$.

If $k>2$, then $P_1\equiv P_2\equiv \cdots \equiv P_{k-1}\equiv 0$ and $P_k(0)>0$. As $H_1, \cdots, H_{k-1}$ share the same virus, then $H_i\equiv 0$ or $H_i\to 0$ for $1<i\le k-1$ by  Proposition~\ref{limsup} (c).  $H_k\to 0$ by Proposition~\ref{limsup} (c). Then,
$P_k\to 0$ by Proposition~\ref{limsup} (a) since $(\sum\limits_{j\le k}H_j)^\infty=H_1^\infty<\frac{1}{e_1}<\frac{1}{e_k}$.
So $H_{k+1}\to 0$ by Proposition~\ref{limsup} (c). Proposition~\ref{limsup} (a) or (b) implies that
$P_{k+1}\to 0$.

We see that for all values of $k$, $H_2,\cdots, H_{k+1}\to 0$ and $P_1,\cdots, P_{k+1}\to 0$. Successive additional applications of
Proposition~\ref{limsup} (a) or (b) and (c) then imply that $H_2,\cdots, H_{n}\to 0$ and $P_1,\cdots, P_n\to 0$. But, then for all $\epsilon>0$, there exists $T>0$ such that
$$
H_1'/(H_1)\ge r_1-\epsilon-a_1H_1,\ t\ge T.
$$
This implies that $H_1^\infty>\frac{r_1}{a_1}>\frac{1}{e_1}$, by (\ref{ss}), a contradiction.
This completes the proof of the first assertion.\\

Proof of (b): Now, suppose that $H_1(0)>0, P_1(0)>0$ and $P_1^\infty<r_1-r_2$. \\
Proposition~\ref{limsup} (c) implies that $H_2\to 0$. By  Proposition~\ref{limsup} (b), $P_2\to 0$. Applying Proposition~\ref{limsup} (c) with $i=1$ and $j=3$,
as $(P_1+P_2)^\infty=P_1^\infty<r_1-r_2<r_1-r_3$, we conclude that $H_3\to 0$. Then,
Proposition~\ref{limsup} (b) implies that $P_3\to 0$. Clearly, we can continue sequential application of Proposition~\ref{limsup} (b) and (c)
to conclude that $H_i,P_i\to 0$ for $i>1$. Then we use that
\begin{eqnarray*}
  \frac{d}{dt}\log H_1P_1^{\frac{a_1}{e_1 n_1}} &=& \frac{H_1'}{H_1}+\frac{a_1P_1'}{P_1 e_1 n_1} \\
   &=& r_1-\frac{a_1}{e_1}-P_1-\hbox{terms that go to zero}
  \end{eqnarray*}
to conclude that $P_1^\infty\ge \frac{r_1e_1-a_1}{e_1}$.
\end{proof}

\begin{lemma}[Lemma 1.2 \cite{KS}]\label{Hofbauer}
Let $x(t)$ be a bounded positive solution of the Lotka-Volterra system
$$
x_i'=x_i(r_i+\sum\limits_{j=1}^n a_{ij}x_j ),\ 1\le i\le n
$$
and suppose there exists $k<n$ and $m,M,\delta>0$ such that $m\le x_i(t)\le M,\ 1\le i\le k,\ t>0$, $x_{k+1}(t)\le \delta,\ t>0$,
and $x_j(t)\to 0$ for $j>k+1$. Suppose also that the $k\times k$ subsystem obtained by setting $x_j=0, j>k$ has a unique positive
equilibrium $p=(p_1,p_2,\cdots,p_k)$. Then
$$
\liminf_{T\to\infty}\frac{1}{T}\int_0^T x_i(t)dt=p_i+O(\delta),\ 1\le i\le k.
$$
The same expression holds for the limit superior.
\end{lemma}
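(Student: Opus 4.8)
The plan is to use the classical time-averaging identity for Lotka--Volterra equations. Since $m\le x_i(t)\le M$ for $1\le i\le k$, the function $\log x_i(t)$ is bounded, so dividing the $i$-th equation by $x_i$ and integrating over $[0,T]$ gives
\[
\frac{\log x_i(T)-\log x_i(0)}{T}=r_i+\sum_{j=1}^n a_{ij}\,\langle x_j\rangle_T,\qquad \langle y\rangle_T:=\frac1T\int_0^T y(t)\,dt,
\]
and the left-hand side tends to $0$ as $T\to\infty$. For $j>k+1$ the hypothesis $x_j(t)\to0$ yields $\langle x_j\rangle_T\to0$; for $j=k+1$ one has $0\le\langle x_{k+1}\rangle_T\le\delta$ for all $T$; and for $1\le j\le k$ the averages $\langle x_j\rangle_T$ remain in $[m,M]$.

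Next I would extract a sequence $T_\ell\to\infty$ along which all of $\langle x_1\rangle_{T_\ell},\dots,\langle x_{k+1}\rangle_{T_\ell}$ converge, say to $\xi_1,\dots,\xi_k$ and to $\eta\in[0,\delta]$ respectively (possible by boundedness). Passing to the limit in the identity above yields, for $1\le i\le k$, the relation $r_i+\sum_{j=1}^k a_{ij}\xi_j+a_{i,k+1}\eta=0$, since the terms with $j>k+1$ vanish. Writing $A=(a_{ij})_{1\le i,j\le k}$ and letting $c,r$ denote the vectors $(a_{i,k+1})_{i\le k}$ and $(r_i)_{i\le k}$, and recalling that the positive equilibrium $p$ of the $k\times k$ subsystem satisfies $Ap=-r$, this reads $A(\xi-p)=-\eta\,c$.

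Then I would observe that $A$ is nonsingular: otherwise the solution set of $Ay=-r$ would be an affine subspace of dimension at least one which, containing the interior point $p$, would contain infinitely many points of the open positive orthant, contradicting uniqueness of $p$. Hence $\xi-p=-\eta A^{-1}c$ and $|\xi_i-p_i|\le\delta\,\|A^{-1}c\|_\infty$ for $1\le i\le k$; that is, $\xi_i=p_i+O(\delta)$ with constant depending only on $A$ and $c$. Because this estimate holds for \emph{every} sequence $T_\ell\to\infty$ along which the averages converge, both $\liminf_{T\to\infty}\langle x_i\rangle_T$ and $\limsup_{T\to\infty}\langle x_i\rangle_T$ lie within $\|A^{-1}c\|_\infty\,\delta$ of $p_i$, which is exactly the asserted conclusion.

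I expect the only genuine point needing care to be the deduction that $A$ is invertible from uniqueness of the positive equilibrium of the $k\times k$ subsystem, together with phrasing ``$O(\delta)$'' as a bound that is uniform in $T$ and independent of the chosen subsequence; the remaining steps are routine manipulations with uniformly bounded quantities.
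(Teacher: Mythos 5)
Your argument is correct and follows essentially the same route as the paper's proof: integrate $x_i'/x_i$ over $[0,T]$, note the left side and the averages of the $x_j$ with $j>k+1$ vanish in the limit while $\langle x_{k+1}\rangle_T\le\delta$, and invert the $k\times k$ matrix $A$ to get $\|z(T)-p\|=O(\delta)$. The only differences are cosmetic (you pass to convergent subsequences where the paper estimates $z(T)$ directly for large $T$) plus a welcome addition: you actually justify the invertibility of $A$ from uniqueness of the positive equilibrium, which the paper merely asserts.
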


\begin{theorem}\label{persist}
Let $1\le k\le n$.
\begin{enumerate}
  \item [(a)] There exists $\epsilon_k>0$ such that if $H_i(0)>0,\ 1\le i\le k$ and \\
  $P_j(0)>0,\ 1\le j\le k-1$, then
  $$
  H_{i,\infty}\ge \epsilon_k,\ 1\le i\le k \ \hbox{and}\ P_{j,\infty}\ge \epsilon_k,\ 1\le j\le k-1.
  $$
  \item [(b)] There exists $\epsilon_k>0$ such that if $H_i(0)>0, P_i(0)>0,\ 1\le i\le k$, then
  $$
  H_{i,\infty}\ge \epsilon_k,\ P_{i,\infty}\ge \epsilon_k,\ 1\le i\le k.
  $$
\end{enumerate}
\end{theorem}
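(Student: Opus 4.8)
The plan is to prove both parts simultaneously by induction on $k$, using the acyclicity/hierarchy structure of the system together with the abstract persistence machinery (the Butler-McGehee / acyclic-covering approach, as in Smith-Thieme or the references \cite{HS,ST} already cited). The base case $k=1$ is essentially Proposition~\ref{weakpersist}: part (a) with $k=1$ is vacuous in the $P$-component and follows for $H_1$ from Proposition~\ref{weakpersist}(a) after upgrading weak persistence (the $\limsup$ bound) to strong persistence (the $\liminf$ bound) via a standard argument; part (b) with $k=1$ is Proposition~\ref{weakpersist}(b), again after the weak-to-strong upgrade.

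For the inductive step I would work on the forward-invariant set $X_k$ where $H_i,P_i>0$ for the relevant indices $\le k$ (with $P_j>0,\ j\le k-1$ in case (a)), and restrict attention to its omega-limit sets, which by Proposition~\ref{bound} lie in the compact global attractor. The key structural observation is that System~\eqref{LV} is \emph{triangular}: the dynamics of $(H_1,\dots,H_k,P_1,\dots,P_{k-1})$ or $(H_1,\dots,H_k,P_1,\dots,P_k)$ are not influenced by the higher-indexed species except through the common competition term $\sum_j a_j H_j$, and by Proposition~\ref{limsup}(c) and (a)/(b) the higher species tend to zero on the boundary of $X_k$ whenever the lower block persists. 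Concretely, I want to show that the "boundary flow" — the dynamics on $\partial X_k$ — has an acyclic covering by the lower-dimensional equilibria $E^*_j, E^\dag_j$ ($j<k$) and their invariant analogues, each of which is a uniform weak repeller for $X_k$. That each such boundary equilibrium repels follows by linearizing the missing equation: e.g., at $E^\dag_{k-1}$ (where $P_{k-1}=0$) the per-capita growth rate of $P_{k-1}$ is $e_{k-1}n_{k-1}(\sum_{j\le k-1}H_j^\dag - 1/e_{k-1})$, which is positive by the ordering \eqref{combo}; similarly at $E^*_{k-1}$ the invasion rate of $H_k$ is $r_k - \sum_{j\ge k}P_j^* = r_k - P_k^*\cdot[\text{sign}] $... more precisely one checks $F_k(H^*) > \sum_{j\ge k}P_j^*$ fails to vanish using \eqref{r} and \eqref{ss}, so $H_k$ can invade. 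Chaining these invasibility facts along the hierarchy $E^*_1 \to E^\dag_1 \to E^*_2 \to E^\dag_2 \to \cdots$ gives an acyclic chain, and the general persistence theorem then yields uniform persistence of the full $k$-block.

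There is, however, a cleaner route that avoids enumerating all boundary equilibria, and I would try it first: use Lemma~\ref{Hofbauer} (the time-average estimate) as the engine. Suppose inductively that the $(k-1)$-block persists uniformly; then on any trajectory in $X_{k-1}$ the species $H_1,\dots,H_{k-1}$ (and $P_1,\dots,P_{k-2}$ or $P_{k-1}$) stay bounded below, the next species is small, and the rest vanish, so Lemma~\ref{Hofbauer} tells us the long-run time-averages of the persisting coordinates equal the corresponding positive equilibrium values up to $O(\delta)$. Feeding these averages into the integrated form $\frac{1}{T}\log\frac{P_{k-1}(T)}{P_{k-1}(0)} \to e_{k-1}n_{k-1}(\sum_{j\le k-1}\overline{H_j} - 1/e_{k-1})$ (resp. the analogous expression for $H_k$) and using that the limiting value is strictly positive by \eqref{combo}–\eqref{r}, we get that $P_{k-1}$ (resp. $H_k$) cannot converge to zero; a uniform version of this — using that the $O(\delta)$ error is controlled and the persistence of the lower block is uniform — promotes this to $P_{k-1,\infty}\ge\epsilon_k$ (resp. $H_{k,\infty}\ge\epsilon_k$) and hence, together with a standard argument that joins uniform weak persistence of each coordinate with dissipativity to get uniform strong persistence, completes the step. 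The \textbf{main obstacle} I anticipate is exactly this last promotion: converting the "time-average is bounded away from the extinction value, hence no single coordinate limits to $0$" statement into a \emph{uniform} lower bound $\epsilon_k$ independent of initial data, and handling the chain of limit sets carefully when several of the lower equilibria coexist as candidate $\omega$-limits. This is where one genuinely needs the acyclicity of the boundary flow (no heteroclinic cycles among the $E^*_j,E^\dag_j$), which the strict inequalities \eqref{enorder}, \eqref{ss}, \eqref{r} are designed to guarantee; I would verify acyclicity directly from the sign pattern of the invasion rates and then invoke the abstract theorem (Theorem 4.6 / Corollary 4.7 in \cite{ST}, say) rather than re-proving it.
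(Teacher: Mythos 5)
Your preferred route (the time-average argument driven by Lemma~\ref{Hofbauer}) is exactly the paper's, down to the interleaved induction $(a,k)\Rightarrow(b,k)\Rightarrow(a,k+1)$ and the use of Corollary 4.8 of \cite{ST} to pass from weak (limsup) to strong (liminf) uniform persistence. But the one step you leave open, and explicitly flag as the main obstacle, is the step that actually needs an argument: promoting ``the long-run average invasion rate is positive, so no single trajectory can send $P_k$ (resp.\ $H_{k+1}$) to zero'' to a lower bound on $P_k^\infty$ that is \emph{uniform} over admissible initial data. Your proposed fallback, verifying an acyclic covering of the boundary flow by the equilibria $E_j^*,E_j^\dag$ and invoking the abstract repeller theorem, is a legitimate alternative in principle but is not carried out: the invasion-rate computation you sketch at $E^*_{k-1}$ is garbled (the correct quantity is $r_k-Q_{k-1}>0$ by \eqref{combo}), and isolatedness and acyclicity of the covering are asserted rather than checked. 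As written, the proof is not complete.

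The paper closes the gap without any Butler--McGehee machinery, by a direct contradiction you should supply. Negate uniform weak persistence of $P_k$: then for every $\delta>0$ there is an admissible solution with $P_k^\infty<\delta$, and after a time translation it satisfies $P_k(t)\le\delta$ for all $t\ge0$; Proposition~\ref{limsup} then forces $H_j,P_j\to0$ for $j\ge k+1$. The inductive hypothesis $(a,k)$ provides the positive lower and upper bounds on $H_1,\dots,H_k,P_1,\dots,P_{k-1}$ required by Lemma~\ref{Hofbauer}, whose conclusion gives $\sum_{j\le k}[H_j]_t=\sum_{j\le k}H_j^\dag+O(\delta)$ for large $t$, where $\sum_{j\le k}H_j^\dag-1/e_k=q$ is a fixed positive number determined by \eqref{equiln} and \eqref{combo}. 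Choosing $\delta$ so small that the $O(\delta)$ error is below $q/2$, and then selecting a bad solution for that $\delta$, the identity $\frac{1}{t}\log P_k^{1/e_kn_k}(t)-\frac{1}{t}\log P_k^{1/e_kn_k}(0)=\sum_{j\le k}[H_j]_t-1/e_k\ge q/2$ forces $P_k(t)\to\infty$, contradicting dissipativity (Proposition~\ref{bound}). This yields the uniform weak bound $P_k^\infty\ge\delta$ for all admissible solutions, and Corollary 4.8 of \cite{ST} converts it to $P_{k,\infty}\ge\epsilon_k$. The step $(b,k)\Rightarrow(a,k+1)$ is identical with $H_{k+1}$ in place of $P_k$ and $E_k^*$ in place of $E_k^\dag$.
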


\begin{proof} We use the notation $[H_i]_t\equiv \frac{1}{t}\int_0^t H_i(s)ds$.
Our proof is by mathematical induction using the ordering of the $2n$ cases as follows
$$
(a,1)<(b,1)<(a,2)<(b,2)<\cdots<(a,n)<(b,n)
$$
where $(a,k)$ denotes case (a) with index $k$.

The cases $(a,1)$ and $(b,1)$ follow immediately from Proposition~\ref{weakpersist} and Corollary 4.8 in \cite{ST} with persistence function
$\rho=\min\{H_1,P_1\}$ in case $(b,1)$.  The latter result says that weak (limsup) uniform persistence implies strong (liminf) uniform persistence when the dynamical system
is dissipative.

For the induction step, assuming that $(a,k)$ holds, we prove that $(b,k)$ holds and assuming that $(b,k)$ holds, we prove that $(a,k+1)$ holds.

We begin by assuming that $(a,k)$ holds and prove that $(b,k)$ holds. We consider solutions satisfying $H_i(0)>0, P_i(0)>0$ for $1\le i\le k$.
Note that other components $H_j(0)$ or $P_j(0)$ for $j>k$ may be positive or zero, we make no assumptions.
As $(a,k)$ holds, there exists $\epsilon_k>0$ such that $H_{i,\infty}\ge \epsilon_k,\ 1\le i\le k$ and $P_{i,\infty}\ge \epsilon_k,\ 1\le i\le k-1$. We need only show the
existence of $\delta>0$ such that $P_{k,\infty}\ge \delta$ for every solution with initial values as described above. In fact, by Corollary 4.8 in \cite{ST},
weak uniform persistence implies strong uniform persistence, it suffices to show that $P_k^\infty \ge \delta$.

If $P_k^\infty<r_k-r_{k+1}$, then $H_{k+1}\to 0$ by Proposition~\ref{limsup} (c). Then, by Proposition~\ref{limsup} (b), $P_{k+1}\to 0$.
Clearly,  we may sequentially apply Proposition~\ref{limsup} (b) and (c) to show that $H_j\to 0, P_j\to 0$ for $j\ge k+1$.

If there is no $\delta>0$ such that $P_k^\infty \ge \delta$ for every solution with initial data as described above, then for every
$\delta>0$, we may find a solution with such initial data such that $P_k^\infty < \delta$. By a translation of time, we may assume that
$P_k(t)\le \delta,\ t\ge 0$ for $0<\delta<r_k-r_{k+1}$ to be determined later. Then $H_j,P_j\to 0, j\ge k+1$.
Now, as $(a,k)$ holds, we may apply Lemma~\ref{Hofbauer}. The subsystem with $H_i=0, \ k+1\le
i\le n$ and $P_i=0,\ k\le i\le n$ has a unique positive equilibrium
by Proposition~\ref{positivequil}. See Remark~\ref{Qn}. The equation
$$
\frac{P_k'}{P_k e_k n_k}=\sum\limits_{j\le k}H_j-\frac{1}{e_k}
$$
implies that
$$
\frac{1}{t}\log \frac{P_k^{\frac{1}{e_k n_k}}(t)}{P_k^{\frac{1}{e_k n_k}}(0)}=\sum\limits_{j\le k}[H_j]_t -\frac{1}{e_k}.
$$
By \eqref{equiln} and Lemma~\ref{Hofbauer}, we have for large $t$
$$
\sum\limits_{j\le k}[H_j]_t-\frac{1}{e_k}=\sum\limits_{j\le k}H_j^\dag-\frac{1}{e_k}+O(\delta)=\frac{1}{e_{k-1}}+q-\frac{1}{e_k}+O(\delta)
$$
where $q=\frac{1}{a_k}(r_k-Q_{k-1})+\frac{1}{e_{k-1}}-\frac{1}{e_k}>0$. On choosing $\delta$ small enough and an appropriate solution, then $\sum\limits_{j\le k}[H_j]_t -\frac{1}{e_k}>q/2$ for large $t$, implying that $P_k\to +\infty$, a contradiction. We have proved that $(a,k)$ implies $(b,k)$.

Now, we assume that $(b,k)$ holds and prove that $(a,k+1)$ holds. We consider solutions satisfying\\
$H_i(0)>0, P_i(0)>0$ for $1\le i\le k$ and $H_{k+1}(0)>0$.  As $(b,k)$ holds by assumption, and following the same arguments as in the previous case, we only need to show that there exists $\delta>0$ such that $H_{k+1}^\infty\ge \delta$ for all solutions with initial data
as just described.

If $H_{k+1}^\infty<\frac{1}{e_{k+1}}-\frac{1}{e_k}$, then $P_{k+1}\to 0$ by Proposition~\ref{limsup} (b) and then\\
$H_{k+2}\to 0$ by Proposition~\ref{limsup} (c).  This reasoning may be iterated to yield $H_i\to 0,\ k+2\le i\le n$ and $P_i\to 0,\ k+1\le i\le n$.

If there is no $\delta>0$ such that $H_{k+1}^\infty \ge \delta$ for every solution with initial data as described above, then for every
$\delta>0$, we may find a solution with such initial data such that $H_{k+1}^\infty < \delta$. By a translation of time, we may assume that
$H_{k+1}(t)\le \delta,\ t\ge 0$ for $0<\delta<\frac{1}{e_{k+1}}-\frac{1}{e_k}$ to be determined later. Then $H_j,P_j\to 0, j\ge k+2$ and $P_{k+1}\to 0$.
Now, using that $(b,k)$ holds, we apply Lemma~\ref{Hofbauer}. The subsystem with $H_i=0, P_i=0 \ k+1\le
i\le n$ has a unique positive equilibrium
by Proposition~\ref{positivequil}. See Remark~\ref{Qn}.
The equation for $H_{k+1}$ is
$$
\frac{H_{k+1}'}{H_{k+1}}=r_{k+1}-\sum\limits_{j=1}^{k}a_jH_j-\sum\limits_{j=k+1}^na_jH_j-\sum\limits_{j=k+1}^nP_j
$$
Integrating, we have
$$
\frac{1}{t}\log \frac{H_{k+1}(t)}{H_{k+1}(0)}=\sum\limits_{j=1}^{k+1}a_j[H_j]_t+O(1/t)
$$

By Remark \ref{Qn} and Lemma~\ref{Hofbauer}, we have that for all large $t$\\
By \eqref{equil} and Lemma~\ref{Hofbauer}, we have that for all large $t$

$$
\sum\limits_{j=1}^ka_j[H_j]=\sum\limits_{j=1}^ka_jH_j^*+O(\delta)=Q_n+O(\delta)
$$
since $H_{k+1}(t)\le\delta$, $[H_{k+1}]_t=O(\delta)$.
Now, $Q_n>0$ so by choosing $\delta$ sufficiently small and an appropriate solution, we can ensure that
the right hand side is bounded below by a positive constant for all large $t$, implying that $H_{k+1}(t)$ is unbounded.
This contradiction completes our proof that $(b,k)$ implies $(a,k+1)$. Thus, our proof is complete by mathematical induction.
\end{proof}

\begin{corollary}\label{meanvalue}
For every solution of \eqref{LV} starting with all components positive, we have that
\begin{equation}
\frac{1}{t}\int_0^t H_i(s)ds \to H_i^*, \ \frac{1}{t}\int_0^t P_i(s)ds\to P_i^*
\end{equation}
where $H_i^*, P_i^*$ are as in \eqref{equil}.

For every solution of \eqref{LV} starting with all components positive except\\
$P_n(0)=0$, we have that
\begin{equation}
\frac{1}{t}\int_0^t H_i(s)ds \to H_i^\dag, \ \frac{1}{t}\int_0^t P_i(s)ds\to P_i^\dag
\end{equation}
where $H_i^\dag, P_i^\dag$ are as in \eqref{equiln}.
\end{corollary}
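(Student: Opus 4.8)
The plan is to obtain two-sided bounds on every surviving component and then read off the limits of the time averages $[H_i]_t\equiv\frac1t\int_0^t H_i(s)\,ds$ and $[P_i]_t$ directly from the logarithmic derivatives, just as was done inside the proof of Theorem~\ref{persist}. For the first statement I would start from a solution with all $2n$ components positive: Proposition~\ref{bound} supplies a uniform upper bound and Theorem~\ref{persist}(b) with $k=n$ a uniform positive lower bound, so $m\le H_i(t),P_i(t)\le M$ for all large $t$, whence $t^{-1}\log H_i(t)\to0$ and $t^{-1}\log P_i(t)\to0$. Dividing the identity $\frac{P_i'}{e_in_iP_i}=\sum_{j\le i}H_j-\frac1{e_i}$ by $t$ and integrating then forces $\sum_{j\le i}[H_j]_t\to\frac1{e_i}$ for each $i$, and dividing $\frac{H_i'}{H_i}=r_i-\sum_j a_jH_j-\sum_{j\ge i}P_j$ by $t$ and integrating forces $r_i-\sum_j a_j[H_j]_t-\sum_{j\ge i}[P_j]_t\to0$.

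Next I would solve these two asymptotic linear systems in sequence. The first is lower triangular in $[H_1]_t,\dots,[H_n]_t$, so forward substitution gives $[H_1]_t\to\frac1{e_1}=H_1^*$ and $[H_j]_t\to\frac1{e_j}-\frac1{e_{j-1}}=H_j^*$; in particular $\sum_j a_j[H_j]_t\to Q_n$. Feeding this into the second relation makes it upper triangular in $[P_1]_t,\dots,[P_n]_t$, namely $\sum_{j\ge i}[P_j]_t\to r_i-Q_n$, and back substitution gives $[P_i]_t\to r_i-r_{i+1}=P_i^*$ for $i<n$ and $[P_n]_t\to r_n-Q_n=P_n^*$, consistent with $F_i(H^*)=\sum_{j\ge i}P_j^*$. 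That settles the first claim.

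For the second claim I would take a solution with all components positive except $P_n(0)=0$, so $P_n\equiv0$; the relevant persistence statement is now Theorem~\ref{persist}(a) with $k=n$, which bounds $H_1,\dots,H_n$ and $P_1,\dots,P_{n-1}$ away from zero, and again $t^{-1}\log H_i(t)\to0$, $t^{-1}\log P_i(t)\to0$ for these components. The $P_i$-identities for $i\le n-1$ give $\sum_{j\le i}[H_j]_t\to\frac1{e_i}$, hence $[H_j]_t\to H_j^*=H_j^\dag$ for $j\le n-1$; the $H_n$-equation, in which the virus term $\sum_{j\ge n}P_j=P_n$ is absent, gives $\sum_j a_j[H_j]_t\to r_n$, so $a_n[H_n]_t\to r_n-Q_{n-1}$, and one checks $\frac{r_n-Q_{n-1}}{a_n}=H_n^*+\frac{P_n^*}{a_n}=H_n^\dag$ using $Q_n=Q_{n-1}+a_nH_n^*$. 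Finally the $H_i$-equations for $i\le n-1$, together with $\sum_j a_j[H_j]_t\to r_n$, give $\sum_{i\le j\le n-1}[P_j]_t\to r_i-r_n$, and back substitution yields $[P_j]_t\to r_j-r_{j+1}=P_j^\dag$ for $j\le n-1$, with $[P_n]_t\equiv0=P_n^\dag$.

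The argument is essentially bookkeeping once dissipativity and persistence are in hand, and there is no real obstacle; the one point I would state carefully is the passage from ``$\sum_{j\le i}[H_j]_t$ converges for every $i$'' to ``$[H_j]_t$ converges for every $j$'' (and likewise for the $P$'s): this is exactly where the triangular structure of the two linear systems is used, each individual average being a finite linear combination of convergent partial sums, and the corollary is a statement about those individual averages. Matching the limits with the closed forms in \eqref{equil} and \eqref{equiln} is then a direct computation.
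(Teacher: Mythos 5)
Your argument is correct, and it reaches the conclusion by a more self-contained route than the paper does. The paper's proof is a one-line citation: uniform persistence from Theorem~\ref{persist} puts the orbit in a compact subset of the open orthant (for the surviving components), and then Theorem 5.2.3 of \cite{HS} --- the general Lotka--Volterra time-average theorem --- delivers the convergence of $[H_i]_t$ and $[P_i]_t$ to the unique interior equilibrium of the relevant subsystem. What you do instead is unpack that theorem for this particular system: the two-sided bounds give $t^{-1}\log H_i(t)\to 0$ and $t^{-1}\log P_i(t)\to 0$, integration of the logarithmic derivatives yields the asymptotic relations $\sum_{j\le i}[H_j]_t\to 1/e_i$ and $r_i-\sum_j a_j[H_j]_t-\sum_{j\ge i}[P_j]_t\to 0$, and the nested structure makes these two linear systems triangular, so forward and back substitution produce the individual limits without ever invoking (or having to verify the invertibility hypothesis of) the general theorem. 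This buys transparency and makes visible exactly why the limits are the closed forms in \eqref{equil} and \eqref{equiln}; the paper's citation buys brevity and generality. Your identification of the correct persistence input in each case --- Theorem~\ref{persist}(b) with $k=n$ for the interior case and Theorem~\ref{persist}(a) with $k=n$ for the case $P_n\equiv 0$ --- matches the paper, and your verification that $(r_n-Q_{n-1})/a_n=H_n^\dag$ is the right bookkeeping step. The only caveat worth recording is the one you already flagged: the corollary asserts convergence of the individual averages, and passing from the convergent partial sums to the individual $[H_j]_t$ and $[P_j]_t$ is legitimate precisely because each is a difference of two convergent quantities, which is where the triangular structure (equivalently, the invertibility of the interaction matrix assumed in the cited theorem) enters.
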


\begin{proof}
This follows from the previous theorem together with Theorem 5.2.3 in \cite{HS}.
\end{proof}

\section{Global Stability for  NIN in a special case}

Positive equilibrium $E^*$ exists so we can write the system as

\begin{eqnarray}\label{LV1}
H_i' &=& H_i\left(\sum\limits\limits_{j=1}^n a_j(H_j^*-H_j)+\sum\limits\limits_{j \ge i}(P_j^*-P_j)\right)\\
P_i' &=& e_in_iP_i \left(\sum\limits\limits_{j \le i}(H_j-H_j^*)\right),\ 1\le i\le n. \nonumber
\end{eqnarray}

Let $U(x,x^*)=x-x^*-x^*\log x/x^*,\ x,x^*>0$, be the familiar Volterra function and let
$$
V=\sum\limits_i c_iU(H_i,H_i^*)+\sum\limits_i d_iU(P_i,P_i^*)
$$
where $c_1,\cdots, c_n$ and $d_1,\cdots,d_n$ are to be determined.

Then the derivative of $V$ along solutions of \eqref{LV1}, $\dot V$, is given by
\begin{eqnarray*}
  \dot V &=& -\left(\sum\limits_i c_i(H_i-H_i^*)\right)\left(\sum\limits_j a_j(H_j-H_j^*)\right)-\sum\limits_i c_i(H_i-H_i^*)\sum\limits_{j\ge i}(P_j-P_j^*) \\\nonumber
   & & +\sum\limits_i d_ie_in_i(P_i-P_i^*)\sum\limits_{j\le i}(H_j-H_j^*)
\end{eqnarray*}
We aim to choose parameters so that the last two terms cancel each other.
The second summation may be rewritten as $\sum\limits_i (P_i-P_i^*)\sum\limits_{j\le i} c_j(H_j-H_j^*)$ so that the last two sums may be combined as
$\sum\limits_i (P_i-P_i^*)\sum\limits_{j\le i} (d_ie_in_i-c_j)(H_j-H_j^*)$. It vanishes if $\forall i, d_ie_in_i-c_j=0,\ j\le i$. Taking $i=n$, we see that
the $c_j$ must be identical so $c_j=a$ for all $j$ for some $a>0$ and $d_i=a/e_in_i$. Therefore, in this case, we have
$$
\dot V =-\left(\sum\limits_i a(H_i-H_i^*)\right)\left(\sum\limits_j a_j(H_j-H_j^*)\right)
$$
If, in addition, $a_j=a$ for all $j$, then we have
\begin{equation}\label{Vdot}
\dot V =-a^2\left(\sum\limits_i H_i-\sum\limits_i H_i^*\right)^2
\end{equation}

\begin{theorem}
Assume that \eqref{ss} holds and  $a_j=a>0, \ 1\le j\le n$ in \eqref{LV1}. Then $E^*$ is globally asymptotically stable relative to the open positive orthant of $\mathbb{R}^{2n}_+$.\\
With the same assumptions, but $H_i^\dag, V_i^\dag$ replacing $H_i^*,V_i^*$ and additionally $P_n \equiv 0$, $E^\dag$ is globally asymptotically stable relative to the open positive orthant of $\mathbb{R}^{2n-1}_+$.
\end{theorem}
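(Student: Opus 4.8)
The plan is to use the Lyapunov function $V$ already constructed, together with LaSalle's invariance principle, to extract global convergence from the computation \eqref{Vdot}. Under the hypotheses $a_j=a$ for all $j$ and \eqref{ss}, Proposition~\ref{positivequil} guarantees $E^*$ exists, so rewriting the system as \eqref{LV1} is legitimate, and with $c_j=a$, $d_i=a/(e_in_i)$ we have $\dot V=-a^2\left(\sum_i H_i-\sum_i H_i^*\right)^2\le 0$. First I would verify that $V$ is a genuine Lyapunov function on the open positive orthant: each $U(x,x^*)$ is nonnegative, strictly convex, vanishes only at $x=x^*$, and tends to $+\infty$ as $x\to 0^+$ or $x\to\infty$; hence sublevel sets of $V$ intersected with the orthant are compact (this uses Proposition~\ref{bound} for the dissipativity/boundedness in the large-$x$ direction, or one can just note $V$ is proper on the orthant), so every positive solution has a nonempty compact omega-limit set contained in the open positive orthant on which $V$ is constant.

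Next I would apply LaSalle: the omega-limit set $\Omega$ is contained in the largest invariant subset $\mathcal{M}$ of $\{\dot V=0\}=\{\sum_i H_i=\sum_i H_i^*\}$. On an invariant solution lying in this set, $\frac{d}{dt}\sum_i H_i\equiv 0$. The key step is then to show $\mathcal{M}=\{E^*\}$. Summing the $H_i$ equations in \eqref{LV1} and using $\sum_j a_j(H_j^*-H_j)=a\big(\sum_j H_j^*-\sum_j H_j\big)=0$ on this set, we get $0=\frac{d}{dt}\sum_i H_i=\sum_i H_i\sum_{j\ge i}(P_j^*-P_j)$. Rearranging the double sum by swapping order of summation gives $\sum_j (P_j^*-P_j)\big(\sum_{i\le j}H_i\big)=0$. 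I would then exploit the $P$-equations: on an invariant orbit in $\mathcal{M}$ the quantities $U(P_i,P_i^*)$ also have derivatives summing (with the weights $d_i$) in a way consistent with $\dot V=0$; more directly, since $V$ is constant on $\mathcal{M}$ and each term is bounded below, and differentiating $V$ again along the orbit shows all the "hidden" terms must also be stationary. The cleanest route: on $\mathcal{M}$ we have $\dot V\equiv 0$, so differentiating, $\ddot V\equiv 0$; since $\dot V=-a^2(S-S^*)^2$ with $S=\sum H_i$, this forces $S\equiv S^*$ AND $\dot S\equiv 0$, which we used, but also we can feed $\dot S\equiv 0$ back: the relation $\sum_j (P_j-P_j^*)\big(\sum_{i\le j}H_i\big)\equiv 0$ must hold identically in $t$. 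Iterating stationarity (each further time-derivative must vanish) or invoking the standard fact that for these Volterra-type Lyapunov functions the largest invariant set in $\{\dot V=0\}$ reduces to the interior equilibrium when it is unique, yields $P_j\equiv P_j^*$ for all $j$ and then $H_j\equiv H_j^*$ from the $H$-equations.

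The main obstacle I anticipate is precisely this last reduction — proving $\mathcal{M}=\{E^*\}$ rather than merely $\mathcal{M}\subseteq\{S=S^*\}$ — because $\dot V$ only controls the single combination $\sum_i H_i$, not the individual $H_i$ or any $P_i$. The honest way around it is the LaSalle argument combined with the structure of \eqref{LV1}: on an invariant orbit with $\sum_i H_i\equiv\sum_i H_i^*$, the $H_i$ equations read $H_i'=H_i\sum_{j\ge i}(P_j^*-P_j)$; taking the difference of consecutive ones shows $\frac{d}{dt}\log(H_i/H_{i+1})=P_i^*-P_i$, while the $P$ equations give $\frac{d}{dt}\log P_i=e_in_i\sum_{j\le i}(H_j-H_j^*)$. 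Feeding these into repeated differentiation of the identity $\dot S\equiv 0$, or alternatively observing that $V$ restricted to $\mathcal{M}$ is still a Lyapunov function for the reduced flow whose only critical point is $E^*$, pins everything down; one can also cite Corollary~\ref{meanvalue} to identify the time-averages on $\Omega$ with $H_i^*,P_i^*$, and since on $\Omega$ the dynamics is recurrent, the averages forcing equality with the equilibrium (together with convexity of $U$) give $\Omega=\{E^*\}$. Global asymptotic stability relative to the open positive orthant follows: attractivity from the above, and Lyapunov stability from $V$ being a proper positive-definite function with $\dot V\le 0$. For the second statement, the argument is verbatim the same after deleting the $P_n$ coordinate: on the invariant face $P_n\equiv 0$ the system \eqref{LV1} with $E^\dag$ in place of $E^*$ has the identical structure (the sums $\sum_{j\ge i}$ and $\sum_{j\le i}$ now run over the surviving indices), $E^\dag$ is the unique interior equilibrium of this $(2n-1)$-dimensional system by Proposition~\ref{positivequil}, and the same $V$ (with $H_i^\dag,P_i^\dag$) gives $\dot V=-a^2(\sum_i H_i-\sum_i H_i^\dag)^2\le 0$, so LaSalle yields global asymptotic stability of $E^\dag$ relative to the open positive orthant of $\mathbb{R}^{2n-1}_+$.
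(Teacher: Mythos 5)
Your setup is the same as the paper's: the Volterra function $V$ with $c_j=a$, $d_i=a/(e_in_i)$, the computation \eqref{Vdot}, boundedness of each component above and away from zero along a positive solution from $V(H(t),P(t))\le V(H(0),P(0))$, and LaSalle reducing the problem to identifying the largest invariant set $\mathcal M$ in $\{\sum_i H_i=\sum_i H_i^*\}$. You also correctly flag that the whole difficulty lies in showing $\mathcal M=\{E^*\}$. But none of the routes you offer for that step actually works. The ``standard fact'' that for Volterra-type Lyapunov functions the largest invariant set in $\{\dot V=0\}$ reduces to the unique interior equilibrium is false: in this very paper, for the one-to-one network the same construction yields the same $\dot V$, yet the limit set can consist of non-constant periodic orbits. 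The time-average argument fails for the same reason: a non-constant periodic Lotka--Volterra orbit also has time averages equal to the equilibrium (and Jensen applied to the strictly convex $V$ only gives $c=\overline{V(x(t))}\ge V(E^*)=0$, not $c=0$), so averages plus convexity cannot force $\Omega=\{E^*\}$. ``Repeated differentiation of $\dot S\equiv 0$'' is never carried out and, without using the nested structure, there is no reason it should terminate in $E^*$. An argument that does not use the upper-triangular form of $M$ in an essential way would ``prove'' global stability for the one-to-one network as well, which is not true.

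The missing idea is the paper's downward peeling argument, which uses the triangularity. On a trajectory in $\mathcal M$ the equations become $H_i'=H_i\sum_{j\ge i}(P_j^*-P_j)$ and $P_i'=e_in_iP_i\sum_{j\le i}(H_j-H_j^*)$. For $i=n$ the sum in the $P_n$ equation is the full sum, which vanishes on $\{\dot V=0\}$, so $P_n$ is constant; then $H_n'=H_n(P_n^*-P_n)$ with $P_n$ constant forces $H_n$ to decay to $0$, blow up, or be constant, and only the last is compatible with the limit set being invariant, bounded, and bounded away from the boundary of the orthant, whence $P_n=P_n^*$ and $H_n$ is constant. Then $\sum_{i\le n-1}H_i$ is constant, so the same dichotomy applies to $P_{n-1}$ and $H_{n-1}$, and an induction down the indices yields $H_i\equiv H_i^*$, $P_i\equiv P_i^*$ for all $i$, hence $\mathcal M=\{E^*\}$. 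Your treatment of the $E^\dag$ case inherits the same gap (the paper handles it identically, starting the peeling at index $n-1$ since $P_n\equiv 0$ gives $H_n'=0$ directly). Everything before and after this reduction in your proposal is fine.
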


\begin{proof} We first note that since $V(H(t),P(t))\le V(H(0),V(0)), t\ge 0$ for every positive solution of \eqref{LV1}, each component is bounded above and below:
$0<p\le x(t)\le P, t\ge 0$, where $x=H_i,P_j$ and $p,P$ may depend on the solution.

Consider a positive solution of \eqref{LV1}. By LaSalle's invariance principle, every point in its (invariant) limit set $L$ must satisfy $\sum\limits_iH_i=\sum\limits_iH_i^*$ since $L\subset \{(H,V):\dot V=0\}$.
Since $V(x)\le V(H(0),P(0))$ for all $x\in L$, $L$ belongs to the interior of the positive orthant and it is bounded away (but maybe not uniformly) from the boundary of the orthant.
We now consider a trajectory belonging to $L$; until further notice, all considerations involve this solution.
Notice that this solution satisfies
\begin{equation}\label{Heqn}
H_i'= H_i\left(\sum\limits\limits_{j \ge i}(P_j^*-P_j)\right)
\end{equation}
\begin{equation}\label{Peqn}
P_i' = e_in_iP_i \left(\sum\limits\limits_{j \le i}(H_j-H_j^*)\right)
\end{equation}

From \eqref{Peqn}, we see that $P_n'\equiv 0$ so $P_n(t)$ is constant.
Then, $H_n'=H_n(P_n^*-P_n)$ so $H_n(t)$ is either converging exponentially fast to zero, blowing up to infinity, or identically constant depending on the value of
$P_n$. The only alternative that is consistent with $L$ being invariant, bounded, and bounded away from the boundary of the orthant is that $H_n(t)$ is constant
and that $P_n=P_n^*$. As we use a similar argument repeatedly below, we refer to it as our standard argument.

Since $H_n$ is constant and $\sum_i H_i$ is constant, equal to $\sum_i H_i^*$, then so is $\sum_{i\le n-1}H_i$ a constant. But now we face the same dilemma as above with the equation \eqref{Peqn} with $i=n-1$ since the sum in parentheses is constant. By our standard argument, the only alternative is that this constant is zero, i.e., that
$\sum_{i\le n-1}H_i=\sum_{i\le n-1}H_i^*$ and $P_{n-1}(t)$ is constant.  The former implies that
$$
H_n=\sum_i H_i-\sum_{i\le n-1}H_i=H_n^*.
$$

Suppose that $1<k\le n$ and that $H_i(t)\equiv H_i^*,\ P_i(t)\equiv P_i^*,\ i\ge k$, hold. We claim that $H_{k-1}(t)\equiv H_{k-1}^*,\ P_{k-1}(t)\equiv P_{k-1}^*$.
As $P_k(t)$ is constant, \eqref{Peqn} implies that $\sum_{j\le k}H_j(t)=\sum_{j\le k}H_j^*$ and since $H_k=H_k^*$, it follows that $\sum_{j\le k-1}H_j(t)=\sum_{j\le k-1}H_j^*$.
Notice that if $k=2$, then the latter gives that $H_1=H_1^*$.
Now from \eqref{Peqn}, $P_{k-1}'(t)=0$ so $P_{k-1}(t)$ is constant. This implies, by \eqref{Heqn} and our standard argument, that $H_{k-1}'=0$ and $P_{k-1}(t)=P_{k-1}^*$.
If $k=2$, we are done: $H_1=H_1^*,\ P_1=P_1^*$. If $k>2$, then $\sum_{j\le k-2}H_j(t)=\sum_{j\le k-1}H_j(t)-H_{k-1}(t)$ is constant so from \eqref{Peqn} and our standard argument
we conclude that $P_{k-2}'=0$ and that $\sum_{j\le k-2}H_j(t)=\sum_{j\le k-2}H_j^*$. The latter implies that
$$
H_{k-1}=\sum_{j\le k-1}H_j-\sum_{j\le k-2}H_j=\sum_{j\le k-1}H_j^*-\sum_{j\le k-2}H_j^*=H_{k-1}^*.
$$
This completes our proof of the claim. By induction, we conclude that $H_i(t)\equiv H_i^*,\ P_i(t)\equiv P_i^*,\ 1\le i\le n$, i.e., our solution is identical to $E^*$.
Since we considered an arbitrary solution
starting at a point of $L$, it follows that $L=\{E^*\}$. As our chosen solution was an arbitrary positive solution, we have established the result.\\
The arguments are nearly identical for the $E^\dag$ case.  From $\eqref{Heqn}$, $H_n'=0$ since $P_n \equiv 0$, therefore the standard argument starts at $n-1$ instead.
\end{proof}

\section{One-to-One Infection Network}

 $M=I$ in the one-to-one infection network so the equations then becomes:

\begin{eqnarray}\label{LVmono}
H_i' &=& H_i\left(r_i-\sum\limits_{j=1}^n a_jH_j\right)-H_iP_i\\
P_i' &=& e_in_iP_i\left(H_i-\frac{1}{e_i}\right),\ 1\le i\le n. \nonumber
\end{eqnarray}

The principle equilibrium for the one-to-one infection network are now described.
\begin{proposition}\label{positivequilmono}
There exists an equilibrium $E^*$ with $H_i$ and $P_i$ positive for all $i$ if and only if the following inequality holds:
\begin{equation}\label{ssmono}
\quad \tilde Q_n<r_j, 1 \le j \le n, \quad \tilde Q_n=\sum\limits_{i=1}^n \frac{a_i}{e_i}.
\end{equation}
In fact,
\begin{eqnarray}\label{equilmono}
H_j^*&=& \frac{1}{e_j},\ j\ge 1,\\
P_j^*&=&r_j-\tilde Q_n,\ j\ge 1.\nonumber
\end{eqnarray}
The positive equilibrium $E^*$ is unique.

We also note the existence of a unique equilibrium $E^\dag$, with all components positive except for $P_n=0$,
given by
\begin{eqnarray}\label{equilnmono}
H_j^\dag&=&H_j^*,\ 1\le j<n,\nonumber \\
H_n^\dag&=&H_n^{*}+\frac{P_n^{*}}{a_n},\\
P_j^\dag&=&P_j^{*}-P_n^{*}=r_j-r_n,\ j \le n,\nonumber
\end{eqnarray}
provided that $r_n<r_j, j\ne n$.
\end{proposition}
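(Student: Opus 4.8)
The plan is to read both equilibria directly off the equations \eqref{LVmono}: once the positivity of the $P_i$ (resp. of $P_i$ for $i<n$) is used to locate the $H_i$, the remaining equilibrium conditions are affine and can simply be solved.

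First I would handle the positive equilibrium $E^*$. Suppose $(H,P)$ is an equilibrium with every component positive. From $P_i'=0$ and $P_i>0$ we obtain $H_i-1/e_i=0$, so necessarily $H_i=1/e_i$ for every $i$; writing $H_i^*=1/e_i$, this gives $\sum_j a_jH_j^* = \sum_j a_j/e_j = \tilde Q_n$. Substituting into $H_i'=0$ and dividing by $H_i>0$ yields $r_i-\tilde Q_n-P_i=0$, i.e. $P_i = r_i-\tilde Q_n$. These formulas show at once that such an equilibrium is unique, since every coordinate is forced, and that it lies in the open positive orthant if and only if $r_i-\tilde Q_n>0$ for all $i$, which is exactly \eqref{ssmono}. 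Conversely, when \eqref{ssmono} holds, direct substitution shows that the point defined by \eqref{equilmono} satisfies both equations of \eqref{LVmono}.

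Next I would treat $E^\dag$, imposing $P_n=0$. The equation $P_n'=e_nn_nP_n(H_n-1/e_n)$ is then satisfied regardless of $H_n$, so this constraint becomes vacuous. For $i<n$, $P_i>0$ again forces $H_i=1/e_i=H_i^*$, hence $\sum_{j<n} a_jH_j = \tilde Q_n - a_n/e_n$. The $H_n$ equation with $P_n=0$ reads $r_n-\sum_j a_jH_j=0$, i.e. $\sum_j a_jH_j = r_n$; combined with the previous identity this gives $a_nH_n = r_n - (\tilde Q_n - a_n/e_n)$, so $H_n^\dag = 1/e_n + (r_n-\tilde Q_n)/a_n = H_n^* + P_n^*/a_n$, which is positive under \eqref{ssmono}. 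Finally, for $i<n$ the $H_i$ equation gives $P_i = r_i - \sum_j a_jH_j = r_i - r_n$; this equals the claimed value $P_i^* - P_n^* = (r_i-\tilde Q_n)-(r_n-\tilde Q_n)$, and it is positive precisely when $r_n<r_i$. Uniqueness is again automatic because each coordinate has been pinned down.

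I do not expect a genuine obstacle: the entire argument reduces to solving an affine system after using positivity to fix the $H_i$. The only points needing care are bookkeeping — keeping track of $\tilde Q_n$ when the $n$-th term is split off, and checking that the two descriptions of $P_j^\dag$ in \eqref{equilnmono}, namely $P_j^*-P_n^*$ and $r_j-r_n$, coincide, which is the short computation above.
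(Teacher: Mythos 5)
Your proof is correct and is exactly the routine verification the paper has in mind (the authors state Proposition~\ref{positivequilmono} without proof, treating it as a direct computation): positivity of the $P_i$ pins down $H_i=1/e_i$, and the $H_i$-equations then force the $P_i$ (and $H_n^\dag$) uniquely, with positivity equivalent to \eqref{ssmono} and $r_n<r_j$. Your bookkeeping, including the identity $P_j^*-P_n^*=r_j-r_n$ and the splitting off of the $a_n/e_n$ term, checks out.
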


\begin{remark}\label{Qnmono}
We will assume hereafter that
\begin{equation}\label{rmono}
r_1>r_2>\cdots >r_n.
\end{equation}
This hypothesis ensures the existence of a family of equilibria $E_k^*, E_k^\dag,\ 1\le k\le n$, characterized as follows.
$E^*_k$ with $H_j,P_j=0, \ j>k$ is described by \eqref{equilmono} but with
$\tilde Q_k$ replacing $\tilde Q_n$.  $E^\dag_k$ satisfies $H_j=0, \ j>k$ and $P_j=0,\ j\ge k$  described by \eqref{equilnmono} but
with $\tilde Q_k$ replacing $\tilde Q_n$.
\end{remark}

\begin{proposition}\label{limsupmono}

\begin{itemize}
  \item[(a)] If $H_i(t)^{\infty}<\frac{1}{e_i}$ then $P_i(t) \to 0$.
  \item[(b)] If $i < j$, $H_i(0) > 0$, and  $(P_i-P_j)^{\infty} < r_i-r_j$, then $H_j(t) \to 0$.
\end{itemize}
\end{proposition}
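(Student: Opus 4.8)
The plan is to mimic the proof of Proposition~\ref{limsup} in the NIN setting, since for $M=I$ the relevant differential inequalities become simpler (each virus $P_i$ now interacts only with the single host $H_i$, and the comparison of two hosts $H_i,H_j$ involves only the two viruses $P_i,P_j$). Both parts follow the same template: write down the logarithmic derivative of an appropriate ratio or power of the state variables, show that the hypothesis forces this logarithmic derivative to be eventually bounded below by a positive constant, and then invoke boundedness (Proposition~\ref{bound}) to derive a contradiction unless the claimed convergence to zero holds.

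For part (a), I would start from the $P_i$ equation in \eqref{LVmono}, which gives
\begin{equation*}
\frac{d}{dt}\log P_i^{\frac{1}{e_i n_i}} = H_i(t) - \frac{1}{e_i}.
\end{equation*}
If $H_i^\infty < \frac{1}{e_i}$, then for some $\epsilon>0$ and all $t\ge T$ we have $H_i(t) \le \frac{1}{e_i}-\epsilon$, so the logarithmic derivative is $\le -\epsilon$ for $t\ge T$, whence $P_i(t)\to 0$. (This is the exact analogue of Proposition~\ref{limsup}(a) with the single-term sum $H_i$ in place of $\sum_{j\le i}H_j$.)

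For part (b), I would consider the ratio $H_i/H_j$ with $i<j$ and $H_i(0)>0$ (so $H_i(t)>0$ for all $t$ by positivity). From \eqref{LVmono}, the $\sum_k a_k H_k$ terms are common to both host equations and cancel, leaving
\begin{equation*}
\frac{d}{dt}\log\frac{H_i(t)}{H_j(t)} = (r_i - r_j) - \bigl(P_i(t) - P_j(t)\bigr).
\end{equation*}
The hypothesis $(P_i-P_j)^\infty < r_i - r_j$ gives $\epsilon>0$ and $T$ with $P_i(t)-P_j(t) \le (r_i-r_j)-\epsilon$ for $t\ge T$, so the right-hand side is $\ge \epsilon$ eventually, forcing $H_i/H_j\to\infty$; since $H_i$ is bounded above (Proposition~\ref{bound}), we get $H_j(t)\to 0$. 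I do not anticipate a genuine obstacle here — this is a routine adaptation, and the only mild subtlety is that in part (b) one should note $H_j(0)>0$ is not needed (if $H_j\equiv 0$ the conclusion is trivial; if $H_j(0)>0$ then $H_j(t)>0$ so the ratio is well-defined), and that the cancellation of the competition terms is exactly what makes the two-host comparison clean in the $M=I$ case, just as in Proposition~\ref{limsup}(c).
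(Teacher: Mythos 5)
Your proposal is correct and follows essentially the same route as the paper: part (a) uses the logarithmic derivative of $P_i^{1/(e_in_i)}$ and part (b) uses the logarithmic derivative of $H_i/H_j$, with the competition terms cancelling, followed by the boundedness argument. No issues.
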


\begin{proof} of (a).
The equation for $P_i$ implies that\\
\begin{equation*}
\frac{d}{dt}\log P_i^{\frac{1}{n_ie_i}}=H_i(t)-\frac{1}{e_i}
\end{equation*}
If $H_i(t)^{\infty}<\frac{1}{e_i}$ then $P_i \to 0$.\\

Proof of (b).
Assume that $i < j, H_i(0), H_j(0)>0$ and $(P_i-P_j)^{\infty} < r_i-r_j$. As
$$
\frac{d}{dt}\log \frac{H_i(t)}{H_j(t)} = \frac{H_i'}{H_i}-\frac{H_j'}{H_j}
= (r_i-r_j)-(P_i(t)-P_j(t)),
$$
it follows that $\frac{H_i}{H_j} \to \infty$, which by the boundedness of $H_i, H_j$, implies that $H_j(t) \to 0$.
\end{proof}

\begin{lemma}\label{extinction2mono}
If $P_1 \equiv 0, H_1(0)>0$ then $H_1 \to \frac{r_1}{a_1}$.
\end{lemma}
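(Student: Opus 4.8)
The plan is to follow the argument of Lemma~\ref{extinction2} almost verbatim, the key structural observation being that once $P_1\equiv 0$ the host $H_1$ grows according to $H_1'=H_1(r_1-\sum_j a_jH_j)$ with no virus term at all, and $r_1$ is the largest growth rate by \eqref{rmono}. So $H_1$ is the unchallenged competitive dominant and should sweep the other hosts to extinction, after which its own logistic dynamics force $H_1\to r_1/a_1$.

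First I would establish that $H_j(t)\to 0$ for every $j\ge 2$. If $H_j(0)=0$ this is immediate since then $H_j\equiv 0$. If $H_j(0)>0$, I would apply Proposition~\ref{limsupmono}(b) with $i=1$: the system is dissipative (Proposition~\ref{bound}), so $P_j$ is nonnegative and bounded, whence $(P_1-P_j)^\infty=\limsup_{t\to\infty}\bigl(-P_j(t)\bigr)=-P_{j,\infty}\le 0<r_1-r_j$, the last inequality being exactly \eqref{rmono}. Proposition~\ref{limsupmono}(b) then yields $H_j(t)\to 0$.

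Next, exactly as in Lemma~\ref{extinction2}: given $\epsilon>0$, the previous step furnishes $T>0$ with $\sum_{j=2}^n a_jH_j(t)<\epsilon$ for all $t\ge T$, so for $t\ge T$ the (virus-free) equation for $H_1$ gives $H_1'\ge H_1(r_1-a_1H_1-\epsilon)$; since $H_1(t)>0$ for all $t$ (positivity in Proposition~\ref{bound}) and $r_1-\epsilon>0$ for small $\epsilon$, a standard differential-inequality comparison gives $H_{1,\infty}\ge (r_1-\epsilon)/a_1$, and letting $\epsilon\downarrow 0$ gives $H_{1,\infty}\ge r_1/a_1$. Conversely $H_1'\le H_1(r_1-a_1H_1)$ for all $t$, so $H_1^\infty\le r_1/a_1$; combining the two bounds gives $H_1(t)\to r_1/a_1$.

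I do not expect any genuinely hard step. The only point that needs care is the sign computation $(P_1-P_j)^\infty\le 0$, which rests on nonnegativity and boundedness of $P_j$ together with the ordering $r_1>r_j$ that places us in the regime where Proposition~\ref{limsupmono}(b) has content; once $H_j\to 0$ for $j\ge 2$ is in hand, the remainder is the same logistic comparison used for Lemma~\ref{extinction2}.
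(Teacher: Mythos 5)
Your proposal is correct and follows the paper's own argument essentially verbatim: apply Proposition~\ref{limsupmono}(b) with $i=1$ (using $P_1\equiv 0$ and $r_1>r_j$ to verify $(P_1-P_j)^\infty\le 0<r_1-r_j$) to get $H_j\to 0$ for $j\ge 2$, then the standard logistic comparison for $H_1$. The only difference is that you spell out the sign computation and the trivial case $H_j(0)=0$, which the paper leaves implicit.
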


\begin{proof}
Since $P_1 \equiv 0, H_i \to 0$ by Proposition $\ref{limsup}$ (b) for $1 < i \le n$.  Therefore, $\forall \epsilon > 0, \exists T > 0$ such that $\forall t \ge T, \sum\limits_{j=2}^n a_jH_j(t) < \epsilon$. Then for $t > T, H_1' > H_1(r_1-a_1H_1-2\epsilon)$.  Therefore $H_{1,\infty} \ge \frac{r_1-2\epsilon}{a_1}$ and since $\epsilon>0$ is arbitrary, $H_{1,\infty} \ge \frac{r_1}{a_1}$.
On the other hand, $H_1' \le H_1(r_1-a_1H_1)$, so $H_1^{\infty} \le \frac{r_1}{a_1}$.  Therefore $H_1 \to \frac{a_1}{r_1}$.
\end{proof}

\begin{proposition}\label{weakpersistmono}
\begin{enumerate}
\item [(a)] If $H_1(0)>0$, then $H_1^\infty\ge \frac{1}{e_1}$.

\item [(b)] If $H_1(0)>0$ and $P_1(0)>0$, then
$$H_1^\infty\ge \frac{1}{e_1}, \ P_1^\infty\ge \min\{r_1-r_2,  \frac{r_1e_1-a_1}{e_1}\}.$$
\end{enumerate}
\end{proposition}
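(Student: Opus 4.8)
The plan is to follow the argument for Proposition~\ref{weakpersist}, which is much shorter here: in the one-to-one network $H_i$ is attacked only by $P_i$ and $P_i$ feeds only on $H_i$, so the induction over intermediate strains needed in the nested case collapses. Throughout I use that \eqref{rmono} and \eqref{ssmono} are in force (the latter is exactly what makes $E^*$ exist) and that all components are bounded by Proposition~\ref{bound}.

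For (a) I would argue by contradiction. Suppose $H_1^\infty<\frac{1}{e_1}$; then Proposition~\ref{limsupmono}(a) gives $P_1\to 0$. Fix $j>1$: if $H_j(0)=0$ then $H_j\equiv 0$, and otherwise, since $P_1\to 0$ and $P_j\ge 0$, we have $(P_1-P_j)^\infty\le 0<r_1-r_j$ by \eqref{rmono}, so Proposition~\ref{limsupmono}(b) gives $H_j\to 0$. Either way $H_j\to 0$ for all $j>1$, so for every $\epsilon\in(0,r_1)$ there is $T$ with $\sum_{j\ge 2}a_jH_j(t)+P_1(t)<\epsilon$ for $t\ge T$; then $H_1'\ge H_1(r_1-a_1H_1-\epsilon)$, which yields $H_{1,\infty}\ge(r_1-\epsilon)/a_1$, and letting $\epsilon\downarrow 0$ gives $H_1^\infty\ge H_{1,\infty}\ge r_1/a_1$. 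But \eqref{ssmono} gives $a_1/e_1\le\tilde Q_n<r_1$, so $\frac{1}{e_1}<\frac{r_1}{a_1}\le H_1^\infty$, contradicting $H_1^\infty<\frac{1}{e_1}$. (This last step is the mechanism of Lemma~\ref{extinction2mono}, adapted to $P_1\to 0$ rather than $P_1\equiv 0$.)

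For (b), part (a) already gives $H_1^\infty\ge\frac{1}{e_1}$, so only the lower bound on $P_1^\infty$ remains. Suppose $P_1^\infty<r_1-r_2$. For $j\ge 2$, since $P_j\ge 0$ and $r_j\le r_2$, we get $(P_1-P_j)^\infty\le P_1^\infty<r_1-r_2\le r_1-r_j$, so Proposition~\ref{limsupmono}(b) (or $H_j\equiv 0$) gives $H_j\to 0$, and then $H_j^\infty=0<\frac{1}{e_j}$ with Proposition~\ref{limsupmono}(a) gives $P_j\to 0$, for every $j\ge 2$. Now differentiate $\log(H_1P_1^{a_1/(e_1n_1)})$ along solutions: the $\log P_1$ term contributes $a_1(H_1-\frac{1}{e_1})$, and combined with $r_1-\sum_j a_jH_j-P_1$ from the $\log H_1$ term the $a_1H_1$ pieces cancel, leaving $r_1-\frac{a_1}{e_1}-P_1-\sum_{j\ge 2}a_jH_j=\left(r_1-\frac{a_1}{e_1}\right)-P_1+o(1)$. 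If moreover $P_1^\infty<\frac{r_1e_1-a_1}{e_1}=r_1-\frac{a_1}{e_1}$, this derivative is eventually bounded below by a positive constant, forcing $H_1P_1^{a_1/(e_1n_1)}\to\infty$, which contradicts Proposition~\ref{bound}. Hence $P_1^\infty\ge\frac{r_1e_1-a_1}{e_1}$, and combining the two cases gives $P_1^\infty\ge\min\{r_1-r_2,\frac{r_1e_1-a_1}{e_1}\}$.

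I do not expect a genuine obstacle; the content is a routine specialization of the nested-case argument. The points needing care are separating components that vanish identically from those that merely tend to zero before applying Proposition~\ref{limsupmono}, invoking \eqref{ssmono} precisely where the strict inequality $a_1/e_1<r_1$ is used, and checking the exponential-growth conclusions of the logistic comparison $H_1'\ge H_1(r_1-a_1H_1-\epsilon)$ and of the Lyapunov-exponent computation for $H_1P_1^{a_1/(e_1n_1)}$.
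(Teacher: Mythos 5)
Your proof is correct and follows essentially the same route as the paper: contradiction via Proposition~\ref{limsupmono} to drive $H_j,P_j\to 0$ for $j>1$, then the logistic comparison for $H_1$ in part (a) and the growth of $\log\bigl(H_1P_1^{a_1/(e_1n_1)}\bigr)$ in part (b). Your only (harmless) deviations are applying Proposition~\ref{limsupmono}(b) with $i=1$ to all $j>1$ at once rather than cascading through successive indices, and being slightly more explicit than the paper about the $H_j(0)=0$ case and about where \eqref{ssmono} enters.
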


\begin{proof} of (a): Assume the conclusion is false. Then $P_1 \to 0$ by Proposition~\ref{limsupmono} (a).  Then $H_2 \to 0$ by Proposition~\ref{limsupmono} (b).  Therefore by sequential applications of Proposition~\ref{limsupmono} (a) and (b), we can conclude that $H_i,P_i \to 0$, for $i>1$.
But, then
$$
H_1'/(H_1)\ge r_1-\epsilon-a_1H_1>r_n-\epsilon-a_1H_1>\frac{a_1}{e_1}-\epsilon-a_1H_1,\ t\ge T
$$
for some $\epsilon>0$ and $T>0$ (recall that $r_i > r_n$ from $(\ref{rmono})$ and $\frac{a_1}{e_1}<r_1$ from (\ref{ssmono})). This implies that $H_1^\infty>\frac{1}{e_1}$, a contradiction.  This completes the proof of the first assertion.

Proof of (b): Now, suppose that $H_1(0)>0, P_1(0)>0$ and $P_1^\infty<r_1-r_2$.  Then $(P_1-P_2)^\infty \le P_1^\infty<r_1-r_2$, therefore Proposition~\ref{limsupmono} (b) implies that
$H_2\to 0$.  Then $P_2 \to 0$ by Proposition~\ref{limsupmono} (a).  Then $(P_2-P_3)^\infty \le P_2^\infty<r_2-r_3$ therefore Proposition~\ref{limsupmono} (b) implies that $H_3\to 0$. Clearly, we can continue sequential applications of Proposition~\ref{limsupmono} (a) and (b) to conclude that $H_i,P_i\to 0$ for $i>1$.
\begin{eqnarray*}
  \frac{d}{dt}\log H_1P_1^{\frac{a_1}{e_1 n_1}} &=& \frac{H_1'}{H_1}+\frac{a_1P_1'}{P_1 e_1 n_1} \\
   &=& r_1-\frac{a_1}{e_1}-P_1-\hbox{terms that go to zero}
  \end{eqnarray*}
to conclude that $P_1^\infty\ge \frac{r_1e_1-a_1}{e_1}$.
\end{proof}


\begin{theorem}\label{persistmono}
Let $1\le k\le n$.
\begin{enumerate}
  \item [(a)] There exists $\epsilon_k>0$ such that if $H_i(0)>0,\ 1\le i\le k$ and $P_j(0)>0,\ 1\le j\le k-1$, then
  $$
  H_{i,\infty}\ge \epsilon_k,\ 1\le i\le k \ \hbox{and}\ P_{j,\infty}\ge \epsilon_k,\ 1\le j\le k-1.
  $$
  \item [(b)] There exists $\epsilon_k>0$ such that if $H_i(0)>0, P_i(0)>0,\ 1\le i\le k$, then
  $$
  H_{i,\infty}\ge \epsilon_k,\ P_{i,\infty}\ge \epsilon_k,\ 1\le i\le k.
  $$
\end{enumerate}
\end{theorem}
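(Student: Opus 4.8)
The plan is to follow the proof of Theorem~\ref{persist} essentially line for line: \eqref{LVmono} is the simplest nested network, every tool used there (Proposition~\ref{limsupmono} in place of Proposition~\ref{limsup}, Proposition~\ref{weakpersistmono} in place of Proposition~\ref{weakpersist}, Lemma~\ref{Hofbauer}, Proposition~\ref{positivequilmono} and Remark~\ref{Qnmono} in place of Proposition~\ref{positivequil} and Remark~\ref{Qn}) has an exact OIN analogue, and the argument is in fact slightly cleaner because no ordering of the $e_i$ is needed. I would argue by induction over the $2n$ statements in the order
$$
(a,1)<(b,1)<(a,2)<(b,2)<\cdots<(a,n)<(b,n),
$$
writing $[H_i]_t=\tfrac1t\int_0^t H_i(s)\,ds$. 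Throughout, by Corollary~4.8 in \cite{ST} it suffices to prove the weak (limsup) form of each statement, since \eqref{LVmono} is dissipative by Proposition~\ref{bound} (here specializing $M=I$); in the multi-component cases one uses the persistence function $\rho=\min$ of the relevant coordinates. The base cases $(a,1)$ and $(b,1)$ are then immediate from Proposition~\ref{weakpersistmono}; note the bound $\min\{r_1-r_2,(r_1e_1-a_1)/e_1\}$ there is strictly positive because $r_1-r_2>0$ by \eqref{rmono} and $(r_1e_1-a_1)/e_1=r_1-a_1/e_1>0$ since $a_1/e_1\le\tilde Q_n<r_1$ by \eqref{ssmono}.

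For the inductive step I would prove $(a,k)\Rightarrow(b,k)$ and $(b,k)\Rightarrow(a,k+1)$. In the first I take a solution with $H_i(0),P_i(0)>0$ for $i\le k$ and must bound $P_k^\infty$ below; in the second a solution with $H_i(0),P_i(0)>0$ for $i\le k$ and $H_{k+1}(0)>0$, and must bound $H_{k+1}^\infty$ below. Assuming the bound fails for every $\delta>0$, a time translation (using that positivity is preserved) produces a solution with the relevant initial coordinates still positive and with, respectively, $P_k(t)\le\delta$ or $H_{k+1}(t)\le\delta$ for all $t\ge0$, where $\delta$ is chosen below $r_k-r_{k+1}$ (resp. below $1/e_{k+1}$). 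Proposition~\ref{limsupmono}(b), applied repeatedly with the fixed index $i=k$ (resp. $i=k+1$), together with Proposition~\ref{limsupmono}(a), then forces $H_j\to0$ for $j\ge k+1$ (resp. $j\ge k+2$) and $P_j\to0$ for $j\ge k+1$. Now the inductive hypothesis supplies uniform upper and lower bounds on $H_1,\dots,H_k$ and on $P_1,\dots,P_{k-1}$ (resp. $P_1,\dots,P_k$), so Lemma~\ref{Hofbauer} applies to the Lotka--Volterra subsystem formed by these bounded coordinates and the one $\delta$-small coordinate, all other coordinates tending to zero. By Proposition~\ref{positivequilmono} and Remark~\ref{Qnmono} the subsystem obtained by deleting the small coordinate has a unique positive equilibrium: $E_k^\dag$, with $H_k^\dag=\tfrac1{e_k}+\tfrac{r_k-\tilde Q_k}{a_k}$, in the first case, and $E_k^*$, with $H_j^*=1/e_j$, in the second. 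Integrating the equation for the suppressed coordinate and inserting the time averages from Lemma~\ref{Hofbauer} gives, in the first case,
$$
\frac1t\log\frac{P_k^{1/(e_k n_k)}(t)}{P_k^{1/(e_k n_k)}(0)}=[H_k]_t-\frac1{e_k}=\frac{r_k-\tilde Q_k}{a_k}+O(\delta),
$$
and in the second, since $[H_j]_t\to0$ for $j\ge k+2$, $[H_{k+1}]_t=O(\delta)$, $[P_{k+1}]_t\to0$, and $\sum_{j\le k}a_j[H_j]_t\to\tilde Q_k+O(\delta)$ by \eqref{equilmono},
$$
\frac1t\log\frac{H_{k+1}(t)}{H_{k+1}(0)}=r_{k+1}-\sum_{j=1}^{n}a_j[H_j]_t-[P_{k+1}]_t+O(1/t)\longrightarrow r_{k+1}-\tilde Q_k+O(\delta).
$$
Both right-hand limits are strictly positive once $\delta$ is small, because $\tilde Q_k\le\tilde Q_n<r_j$ for every $j$ by \eqref{ssmono} and \eqref{rmono}; hence the suppressed coordinate grows without bound, contradicting Proposition~\ref{bound}. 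This closes the induction.

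The argument is largely mechanical, so the ``hard part'' is really bookkeeping: at each stage one must correctly identify which coordinates the inductive hypothesis controls, pin down the corresponding subsystem and verify via Proposition~\ref{positivequilmono}/Remark~\ref{Qnmono} that it has a unique positive equilibrium (so Lemma~\ref{Hofbauer} is applicable), and then confirm that the resulting average per-capita growth rate of the species being suppressed is genuinely positive. That last point is the only place any real inequality is used, and it reduces entirely to the chain $a_i/e_i\le\tilde Q_k\le\tilde Q_n<r_j$, which is exactly what hypotheses \eqref{ssmono} and \eqref{rmono} provide; in particular, unlike the NIN case, no hypothesis on the ordering of the $e_i$ enters.
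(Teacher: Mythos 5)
Your proposal is correct and follows essentially the same route as the paper's own proof: the same induction over the ordering $(a,1)<(b,1)<\cdots<(a,n)<(b,n)$, the same reduction to weak persistence via Corollary 4.8 of \cite{ST}, the same time-translation/contradiction setup, and the same application of Lemma~\ref{Hofbauer} to the subsystems $E_k^\dag$ and $E_k^*$ identified in Remark~\ref{Qnmono}. Your explicit computation $r_{k+1}-\tilde Q_k+O(\delta)>0$ in the $(b,k)\Rightarrow(a,k+1)$ step is in fact slightly cleaner than the paper's corresponding display, which omits the $r_{k+1}$ term and writes $\tilde Q_n$ where $\tilde Q_k$ is meant.
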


\begin{proof} We use the notation $[H_i]_t\equiv \frac{1}{t}\int_0^t H_i(s)ds$.
Our proof is by mathematical induction using the ordering of the $2n$ cases as follows
$$
(a,1)<(b,1)<(a,2)<(b,2)<\cdots<(a,n)<(b,n)
$$
where $(a,k)$ denotes case (a) with index $k$.

The cases $(a,1)$ and $(b,1)$ follow immediately from Proposition~\ref{weakpersistmono} and Corollary 4.8 in \cite{ST} with persistence function
$\rho=\min\{H_1,P_1\}$ in case $(b,1)$.

For the induction step, assuming that $(a,k)$ holds, we prove that $(b,k)$ holds and assuming that $(b,k)$ holds, we prove that $(a,k+1)$ holds.

We begin by assuming that $(a,k)$ holds and prove that $(b,k)$ holds. We consider solutions satisfying $H_i(0)>0, P_i(0)>0$ for $1\le i\le k$.
Note that other components $H_j(0)$ or $P_j(0)$ for $j>k$ may be positive or zero, we make no assumptions.
As $(a,k)$ holds, there exists $\epsilon_k>0$ such that $H_{i,\infty}\ge \epsilon_k,\ 1\le i\le k$ and $P_{i,\infty}\ge \epsilon_k,\ 1\le i\le k-1$. We need only show the
existence of $\delta>0$ such that $P_{k,\infty}\ge \delta$ for every solution with initial values as described above. In fact, by the above-mentioned result
that weak uniform persistence implies strong uniform persistence, it suffices to show that $P_k^\infty \ge \delta$.

If $P_k^\infty<r_k-r_{k+1}$, then $H_{k+1}\to 0$ by Proposition~\ref{limsupmono} (b). Then, by\\
Proposition~\ref{limsupmono} (a), $P_{k+1}\to 0$.  Clearly,  we may sequentially apply\\
Proposition~\ref{limsupmono} (b) and (a) to show that $H_j\to 0, P_j\to 0$ for $j\ge k+1$.

If there is no $\delta>0$ such that $P_k^\infty \ge \delta$ for every solution with initial data as described above, then for every
$\delta>0$, we may find a solution with initial data such that $P_k^\infty < \delta$. By a translation of time, we may assume that
$P_k(t)\le \delta,\ t\ge 0$ for $0<\delta<r_k-r_{k+1}$ to be determined later. Then $H_j,P_j\to 0, j\ge k+1$.
Now, as $(a,k)$ holds, we may apply Lemma~\ref{Hofbauer}. The subsystem with $H_i=0, \ k+1\le
i\le n$ and $P_i=0,\ k\le i\le n$ has a unique positive equilibrium
by Proposition~\ref{positivequilmono}. See Remark~\ref{Qnmono}. The equation
$$
\frac{P_k'}{P_k e_k n_k}=H_k-\frac{1}{e_k}
$$
implies that
$$
\frac{1}{t}\log \frac{P_k^{\frac{1}{e_k n_k}}(t)}{P_k^{\frac{1}{e_k n_k}}(0)}=[H_k]_t -\frac{1}{e_k}.
$$
By \eqref{equilnmono} and Lemma~\ref{Hofbauer}, we have for large $t$
$$
[H_k]_t-\frac{1}{e_k}=H_k^\dag-\frac{1}{e_k}+O(\delta)=H_k^{*}+\frac{P_k^{*}}{a_k}-\frac{1}{e_k}+O(\delta)=\frac{P_k^{*}}{a_k}+O(\delta)>0
$$
Implying that $P_k\to +\infty$, a contradiction. We have proved that $(a,k)$ implies $(b,k)$.

Now, we assume that $(b,k)$ holds and prove that $(a,k+1)$ holds. We consider solutions satisfying\\
$H_i(0)>0, P_i(0)>0$ for $1\le i\le k$ and $H_{k+1}(0)>0$.
As $(b,k)$ holds by assumption, and following the same arguments as in the previous case, we only need to show that there exists $\delta>0$ such that $H_{k+1}^\infty\ge \delta$ for all solutions with initial data
as just described.

If $H_{k+1}^\infty<\frac{1}{e_{k+1}}$, then $P_{k+1}\to 0$ by Proposition~\ref{limsupmono} (a) and then\\
$H_{k+2}\to 0$ by Proposition~\ref{limsupmono} (b).  This reasoning may be iterated to yield $H_i\to 0,\ k+2\le i\le n$ and $P_i\to 0,\ k+1\le i\le n$.

If there is no $\delta>0$ such that $H_{k+1}^\infty \ge \delta$ for every solution with initial data as described above, then for every
$\delta>0$, we may find a solution with such initial data such that $H_{k+1}^\infty < \delta$. By a translation of time, we may assume that
$H_{k+1}(t)\le \delta,\ t\ge 0$ for $0<\delta<\frac{1}{e_{k+1}}$ to be determined later. Then $H_j,P_j\to 0, j\ge k+2$ and $P_{k+1}\to 0$.
Now, using that $(b,k)$ holds, we apply Lemma~\ref{Hofbauer}. The subsystem with $H_i=0, P_i=0 \ k+1\le
i\le n$ has a unique positive equilibrium
by Proposition~\ref{positivequilmono}. See Remark~\ref{Qnmono}.
The equation for $H_{k+1}$ is
$$
\frac{H_{k+1}'}{H_{k+1}}=r_{k+1}-\sum\limits_{j=1}^{k}a_jH_j-\sum\limits_{j=k+1}^na_jH_j-P_{k+1}
$$
Integrating, we have
$$
\frac{1}{t}\log \frac{H_{k+1}(t)}{H_{k+1}(0)}=\sum\limits_{j=1}^{k+1}a_j[H_j]_t+O(1/t)
$$
By \eqref{equilmono} and Lemma~\ref{Hofbauer}, we have that for all large $t$
$$
\sum\limits_{j=1}^ka_j[H_j]=\sum\limits_{j=1}^ka_jH_j^*+O(\delta)=\tilde Q_n+O(\delta).
$$
Since $H_{k+1}(t)\le\delta$, $[H_{k+1}]_t=O(\delta)$.
Now, $\tilde Q_n>0$ so by choosing $\delta$ sufficiently small and an appropriate solution, we can ensure that
the right hand side is bounded below by a positive constant for all large $t$, implying that $H_{k+1}(t)$ is unbounded.
This contradiction completes our proof that $(b,k)$ implies $(a,k+1)$. Thus, our proof is complete by mathematical induction.
\end{proof}

\begin{corollary}\label{meanvaluemono}
For every solution of \eqref{LVmono} starting with all components positive, we have that
\begin{equation}
\frac{1}{t}\int_0^t H_i(s)ds \to H_i^*, \ \frac{1}{t}\int_0^t P_i(s)ds\to P_i^*
\end{equation}
where $H_i^*, P_i^*$ are as in \eqref{equilmono}.

For every solution of \eqref{LVmono} starting with all components positive except $P_n(0)=0$, we have that
\begin{equation}
\frac{1}{t}\int_0^t H_i(s)ds \to H_i^\dag, \ \frac{1}{t}\int_0^t P_i(s)ds\to P_i^\dag
\end{equation}
where $H_i^\dag, P_i^\dag$ are as in \eqref{equilnmono}.
\end{corollary}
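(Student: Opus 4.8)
The plan is to mirror the proof of Corollary~\ref{meanvalue}: the statement follows from Theorem~\ref{persistmono} together with the averaging theorem for permanent Lotka--Volterra systems, Theorem 5.2.3 in \cite{HS}. First I would observe that \eqref{LVmono} is a Lotka--Volterra system in the $2n$ variables $(H_1,\dots,H_n,P_1,\dots,P_n)$ with interaction matrix $A$ whose $(H_i,H_j)$ entry is $-a_j$, whose $(H_i,P_i)$ entry is $-1$, whose $(P_i,H_i)$ entry is $e_in_i$, and whose remaining entries vanish. Reading off the $P_i$-rows of $Ax=0$ forces $x_{H_i}=0$ for all $i$, and then the $H_i$-rows force $x_{P_i}=0$, so $A$ is invertible; this is the nondegeneracy hypothesis needed for the averaging theorem to give convergence (not mere accumulation) of the time averages.

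For the first assertion, consider a solution with all components positive. By Proposition~\ref{bound} the system is dissipative, and by Theorem~\ref{persistmono}, case $(b,n)$, it is uniformly persistent, so the solution eventually lies in a compact subset of the open positive orthant. By Proposition~\ref{positivequilmono}, together with \eqref{ssmono} and \eqref{rmono}, $E^*$ is the unique equilibrium in that orthant. Theorem 5.2.3 of \cite{HS} then yields $[H_i]_t\to H_i^*$ and $[P_i]_t\to P_i^*$, where $[\cdot]_t$ is the time-average notation of the previous proof.

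For the second assertion, note that $\{P_n=0\}$ is invariant, and on it \eqref{LVmono} reduces to a Lotka--Volterra system in the $2n-1$ variables $(H_1,\dots,H_n,P_1,\dots,P_{n-1})$; the same computation as above shows its interaction matrix is invertible, and Proposition~\ref{positivequilmono} (using $r_n<r_j$ for $j\ne n$, cf. Remark~\ref{Qnmono}) identifies $E^\dag$ as its unique equilibrium with all $2n-1$ components positive. A solution starting with all components positive except $P_n(0)=0$ stays on this hyperplane; Theorem~\ref{persistmono}, case $(a,n)$, gives $H_{i,\infty}\ge\epsilon_n$ for $1\le i\le n$ and $P_{i,\infty}\ge\epsilon_n$ for $1\le i\le n-1$, so together with dissipativity the solution eventually lies in a compact subset of the relative interior of the face. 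Applying Theorem 5.2.3 of \cite{HS} within this face gives $[H_i]_t\to H_i^\dag$ and $[P_i]_t\to P_i^\dag$.

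The only point needing care — the ``main obstacle,'' such as it is — is confirming that the hypotheses of Theorem 5.2.3 of \cite{HS} are genuinely met on the relevant invariant set: permanence (supplied by Theorem~\ref{persistmono}) and uniqueness of the interior equilibrium (supplied by Proposition~\ref{positivequilmono} and Remark~\ref{Qnmono}, the $\dag$ case requiring one to rule out any other all-positive equilibrium inside the face $\{P_n=0\}$). An alternative self-contained route is to integrate $\tfrac{d}{dt}\log H_i$ and $\tfrac{d}{dt}\log P_i$, divide by $t$, and let $t\to\infty$ using that $\tfrac1t\log H_i(t)\to 0$ and $\tfrac1t\log P_i(t)\to 0$ by persistence and boundedness; this yields the linear system $A\,(z-E^*)=o(1)$ for $z=([H_i]_t,[P_i]_t)$, whose unique solution is $E^*$ (resp. $E^\dag$) precisely because $A$ is invertible, so it reduces to the same computation.
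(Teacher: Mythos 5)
Your proposal is correct and takes the same route as the paper, which simply cites Theorem~\ref{persistmono} together with Theorem 5.2.3 of \cite{HS}; you have merely filled in the details the paper leaves implicit (invertibility of the interaction matrix, uniqueness of the interior equilibrium, and the restriction to the invariant face $\{P_n=0\}$ for the $E^\dag$ case). No gaps.
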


\begin{proof}
This follows from the previous theorem together with Theorem 5.2.3 in \cite{HS}.
\end{proof}

\section{Global dynamics for the one-to-one network}
Using the positive equilibrium $E^*$, we can write the system as

\begin{eqnarray}\label{MonoLV1}
H_i' &=& H_i\left(\sum\limits_{j=1}^n a_j(H_j^*-H_j)+P_i^*-P_i\right)\\
P_i' &=& e_in_iP_i \left(H_i-H_i^*\right),\ 1\le i\le n. \nonumber
\end{eqnarray}

As before, we define
$$
V=\sum_i c_iU(H_i,H_i^*)+\sum_i d_iU(P_i,P_i^*)
$$
where $c_1,\cdots, c_n$ and $d_1,\cdots,d_n$ are to be determined.

Then the derivative of $V$ along solutions of \eqref{MonoLV1}, $\dot V$, is given by
\begin{eqnarray*}
  \dot V &=& -\left(\sum_i c_i(H_i-H_i^*)\right)\left(\sum_j a_j(H_j-H_j^*)\right)-\sum_i c_i(H_i-H_i^*)(P_i-P_i^*) \\\nonumber
   & & +\sum_i d_ie_in_i(P_i-P_i^*)(H_i-H_i^*)
\end{eqnarray*}
Letting $c_i=a_i$ and $d_i=\frac{a_i}{e_in_i}$ causes the last two summations to cancel each other out.  Therefore in this case we have

\begin{equation}\label{MonoVdot}
\dot V =-\left(\sum_i a_iH_i-\sum_i a_iH_i^*\right)^2
\end{equation}

Below, we use the notation $(H(t),P(t))$ for the $2n$-vector solution $(H_1(t),H_2(t),\cdots,H_n(t),P_1(t),\cdots, P_n(t))$.

\begin{theorem}
The  $\omega$-limit set of a positive solution of \eqref{MonoLV1} is either $E^*$ or it consists of non-constant entire orbits, $(H(t),P(t))$,
satisfying all of the following:
\begin{enumerate}
  \item [(a)] $\sum_{i=1}^{n} a_i H_i(t)=\sum_{i=1}^{n} a_i H_i^*, \ t\in \mathbb{R}$.
  \item [(b)] $\prod_{i=1}^n P_i(t)^{a_i/e_in_i}$ is independent of $t$.
   \item [(c)] $\forall i$, $(H_i(t),P_i(t))$ is a positive solution of the conservative planar system
   \begin{eqnarray}\label{planar}
H_i'&=& H_i\left(P_i^*-P_i\right)\\\nonumber
P_i'& =& e_in_iP_i \left(H_i-H_i^*\right).
\end{eqnarray}
\end{enumerate}

Similarly, the  $\omega$-limit set of a solution of \eqref{MonoLV1} with all components positive except $P_n\equiv 0$ is either $E^\dag$ or it consists of non-constant entire orbits, as in the previous case, on the hyperplane $\sum_{i=1}^{n} a_i H_i(t)=\sum_{i=1}^{n} a_i H_i^\dag$ with $H_n(t)\equiv H_n^\dag$ and with $\prod_{i=1}^{n-1} P_i(t)^{a_i/e_in_i}$ independent of $t$. Furthermore, for
$1\le i<n,\ (H_i(t),P_i(t))$ satisfies \eqref{planar} but with $H_i^\dag, V_i^\dag$ replacing
$H_i^*,V_i^*$.
\end{theorem}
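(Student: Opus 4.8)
The plan is to run LaSalle's invariance principle on the Lyapunov function $V$ appearing in \eqref{MonoVdot}, read off the dynamics it forces on the $\omega$-limit set, and then invoke uniqueness of the positive equilibrium to split off the constant alternative. First I would record the standard preliminaries. Since $\dot V\le 0$, $V$ is nonincreasing along every positive solution, so $V(H(t),P(t))\le V(H(0),P(0))$ for $t\ge 0$; because $U(x,x^{*})\to\infty$ as $x\to 0^{+}$ and as $x\to\infty$, each coordinate of the solution is bounded above and bounded below away from $0$ (with bounds depending on the solution), exactly the boundedness step used in the NIN global-stability theorem. Hence the $\omega$-limit set $L$ is nonempty, compact, connected, invariant, contained in the open positive orthant, and $V\equiv V_{\infty}:=\lim_{t\to\infty}V(H(t),P(t))$ on $L$.

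Next, LaSalle: $L\subset\{\dot V=0\}$, and by \eqref{MonoVdot} this set is $\{\sum_i a_iH_i=\sum_i a_iH_i^{*}\}$; since $L$ is invariant the identity holds for all $t\in\mathbb{R}$ along every entire orbit in $L$, which is (a). Fixing one such orbit $(H(t),P(t))$ (necessarily positive), along it $\sum_j a_j(H_j^{*}-H_j)\equiv 0$, so \eqref{MonoLV1} collapses termwise to $H_i'=H_i(P_i^{*}-P_i)$ and $P_i'=e_in_iP_i(H_i-H_i^{*})$, i.e. each $(H_i,P_i)$ is a positive solution of \eqref{planar}; that is (c). For (b), $\frac{d}{dt}\log\prod_i P_i^{a_i/e_in_i}=\sum_i\frac{a_i}{e_in_i}e_in_i(H_i-H_i^{*})=\sum_i a_i(H_i-H_i^{*})=0$ by (a). Finally, if $L$ contains an equilibrium of \eqref{MonoLV1} it has all coordinates positive, hence equals $E^{*}$ by the uniqueness in Proposition~\ref{positivequilmono}; then $V_{\infty}=V(E^{*})=0$, and since $V>0$ off $E^{*}$ this forces $L=\{E^{*}\}$. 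If $L$ contains no equilibrium, every entire orbit in $L$ is non-constant; this is the asserted dichotomy.

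For the $E^{\dag}$ statement I would work on the invariant hyperplane $\{P_n=0\}$, where \eqref{MonoLV1} becomes the analogous system with $H^{\dag},P^{\dag}$ in place of $H^{*},P^{*}$ (using $P_n^{\dag}=0$), and the choice $c_i=a_i$, $d_i=a_i/e_in_i$ again yields $\dot V^{\dag}=-(\sum_i a_iH_i-\sum_i a_iH_i^{\dag})^{2}$ because the $i=n$ cross terms cancel identically. Re-running the argument gives, on $L^{\dag}\subset\{P_n=0\}$: $\sum_i a_iH_i\equiv\sum_i a_iH_i^{\dag}$, and for $i<n$ the pair $(H_i,P_i)$ solves \eqref{planar} with $^{\dag}$ data. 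For $i=n$ the reduced equation is $H_n'=H_n(P_n^{\dag}-P_n)\equiv 0$, so $H_n$ is constant on $L^{\dag}$; since the original solution approaches $L^{\dag}$ and $H_n$ is that single constant there, $H_n(t)$ converges to it, while Corollary~\ref{meanvaluemono} gives $\frac{1}{t}\int_0^t H_n(s)ds\to H_n^{\dag}$, forcing the constant to be $H_n^{\dag}$. Then $\frac{d}{dt}\log\prod_{i<n}P_i^{a_i/e_in_i}=\sum_{i<n}a_i(H_i-H_i^{\dag})=\sum_{i=1}^{n}a_i(H_i-H_i^{\dag})-a_n(H_n-H_n^{\dag})=0$, which is the stated conservation law, and the equilibrium alternative follows from uniqueness of $E^{\dag}$ as before.

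The step I expect to be the genuine obstacle is pinning down $H_n\equiv H_n^{\dag}$ on $L^{\dag}$: on $\{P_n=0\}$ the $H_n$-equation degenerates on the limit set, so a priori $H_n$ could freeze at any positive level, and excluding this requires the Cesàro-mean information of Corollary~\ref{meanvaluemono} (or an equivalent persistence-type input). Everything else is routine bookkeeping with LaSalle and the explicit conserved quantities.
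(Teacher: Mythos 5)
Your treatment of the positive-orbit case is correct and follows essentially the same route as the paper: LaSalle applied to $V$ gives (a), the hyperplane identity collapses \eqref{MonoLV1} termwise to the decoupled planar systems \eqref{planar}, which is (c), and differentiating $\log\prod_i P_i^{a_i/e_in_i}$ gives (b). Your way of extracting the dichotomy (if $L$ contains an equilibrium it must be $E^*$ by uniqueness, whence $V\equiv 0$ on $L$ and $L=\{E^*\}$) is a harmless variant of the paper's appeal to the stability of $E^*$.

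The gap is exactly where you predicted, in the identification $H_n\equiv H_n^\dag$ on the limit set $L^\dag$ of a solution with $P_n\equiv 0$. You assert that $H_n$ ``is that single constant'' on $L^\dag$, hence that $H_n(t)$ converges, and then let Corollary~\ref{meanvaluemono} identify the limit. But the reduced equation $H_n'=0$ only tells you that $H_n$ is constant along each individual entire orbit contained in $L^\dag$; it does not tell you that $H_n$ takes the same value on all of $L^\dag$ (a compact connected invariant set foliated by orbits can perfectly well carry a non-constant continuous first integral), so the convergence of $H_n(t)$ --- and with it your use of the Ces\`aro mean --- is unjustified as written. Nor can the Ces\`aro-mean information alone close this: a limit set on which $H_n$ takes values on both sides of $H_n^\dag$ is compatible with $\frac1t\int_0^t H_n\,ds\to H_n^\dag$. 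The paper closes the gap orbitwise: along a fixed entire orbit in $L^\dag$, $H_n\equiv c$ for some constant $c$, while each $(H_i,P_i)$ with $i<n$ is a periodic (or constant) solution of the planar conservative system, whose time average over a period equals $H_i^\dag$ (integrate $P_i'/(e_in_iP_i)$ over one period); averaging the identity $\sum_i a_iH_i=\sum_i a_iH_i^\dag$ over longer and longer time windows then yields $a_n(c-H_n^\dag)=0$, i.e.\ $c=H_n^\dag$ on every orbit of $L^\dag$. With that substitution, your computation of the conserved product $\prod_{i<n}P_i^{a_i/e_in_i}$ and the $E^\dag$ dichotomy go through unchanged.
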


\begin{proof} We first note that since $V(H(t),P(t))\le V(H(0),V(0)), t\ge 0$ for every positive solution of \eqref{MonoLV1}, each component is bounded above and below:
$0<p\le x(t)\le P, t\ge 0$, where $x=H_i,P_j$ and $p,P$ may depend on the solution.

Consider a positive solution of \eqref{MonoLV1}. By LaSalle's invariance principle, every point in its (invariant) limit set $L$ must satisfy $\sum_ia_iH_i=\sum_ia_iH_i^*$ since $L\subset \{(H,V):\dot V=0\}$.  As in the $NIN$ case, $L$ belongs to the interior of the positive orthant and it is bounded away from the boundary of the orthant.
We now consider a trajectory belonging to $L$; until further notice, all considerations involve this solution.
Notice that this solution satisfies \eqref{planar}.
Thus on $L$, the system decouples into $n$ independent
planar conservative systems, the positive solution of which is either periodic or is the positive equilibrium. See e.g. section 2.3 of \cite{HS}.
Notice that $\sum_i \frac{a_iP_i'}{e_in_iP_i}=\sum_i a_i(H_i-H_i^*)=0$, consequently $\prod_{i=1}^n P_i(t)^{a_i/e_in_i}$ is independent of $t$.

If $E^* \in \omega$-limit set, then $E^*=\omega$-limit set, since $E^*$ is stable. Consequently, if $E^*\notin L$, then at least one of the $(H_i,P_i)$ must be a non-trivial periodic orbit.

The arguments are nearly identical for the case that the solution satisfies $P_n\equiv 0$ and other coordinates positive.
Liapunov function $V$ differs from the previous one only in that the sum goes from one to $n-1$ in the second summation and $H_i^\dag, V_i^\dag$ replace
$H_i^*,V_i^*$; the
choice of the $c_i$ and $d_i$ are as before. \eqref{MonoVdot} is changed only in that superscript $\dag$ replaces $*$.

We only note that
the counterpart to \eqref{planar} for $i=n$ reads $H_n'=0$. As $\sum_i a_i(H_i-H_i^\dag)=0$
on the limit set and since  any positive periodic limiting solution must satisfy $\int_0^T H_i dt=H_i^\dag$, it follows that $H_n\equiv H_n^\dag$.

\end{proof}

In the special case that $n=2$, since $H_1$ ($P_1$) can be expressed in terms of $H_2$ ($P_2$), on $\{(H,V):\dot V=0\}$, every solution
in $L$ is periodic (possible constant).

\end{document}